\newcommand\blfootnote[1]{%
\begingroup
\renewcommand\thefootnote{}\footnote{#1}%
\addtocounter{footnote}{-1}%
\endgroup
}
\DeclareMathAlphabet\mathbfcal{OMS}{cmsy}{b}{n}
\DeclareMathOperator{\Trace}{Tr}
\newtheorem{lemma}{Lemma}
\newtheorem{corollary}{Corollary}
\DeclareMathOperator*{\argmin}{arg\,min}
\DeclareMathOperator*{\diag}{diag}
\begin{document}

\title{Plug-In RIS: A Novel Approach to Fully Passive Reconfigurable Intelligent Surfaces}

\author{Mahmoud~Raeisi,~\IEEEmembership{Student Member,~IEEE},
        Ibrahim~Yildirim,~\IEEEmembership{Graduate Student Member,~IEEE},
        %Asil~Koc,~\IEEEmembership{Graduate Student Member,~IEEE}, 
        Mehmet C.~Ilter,~\IEEEmembership{Senior Member,~IEEE},
        Majid~Gerami,~\IEEEmembership{Member,~IEEE},
        and
        Ertugrul~Basar,~\IEEEmembership{Fellow,~IEEE}
        %and
        %Tho Le-Ngoc,~\IEEEmembership{Life Fellow,~IEEE}
%\thanks{This work was supported by Huawei Technologies under Grant OS.00205.}
\thanks{M. Raeisi, I. Yildirim, and E. Basar are with the Communications Research and Innovation Laboratory (CoreLab), Department of Electrical and Electronics Engineering, Koç University, Sariyer, Istanbul 34450, Turkey. (e-mail: mraeisi19@ku.edu.tr; yildirimib@itu.edu.tr; ebasar@ku.edu.tr)}

\thanks{I. Yildirim is also with the Faculty of Electrical and Electronics Engineering, Istanbul Technical University, Istanbul 34469, Turkey.}

\thanks{Mehmet C. Ilter was with Huawei Lund Research Center during this work and is currently with the Department of Electrical Engineering, Tampere University, Finland (mehmet.ilter@tuni.fi).}

\thanks{M. Gerami is with the Lund Research Center, Huawei Technologies Sweden AB, Sweden. (e-mail: majid.gerami@huawei.com)}
%\thanks{A. Koc, and T. Le-Ngoc are with Department of Electrical and Computer Engineering, McGill University, Montreal, QC, Canada. (e-mail: asil.koc@mail.mcgill.ca; tho.le-ngoc@mcgill.ca)} 
\vspace{-3em}
}

% The paper headers
\markboth{IEEE ***,~Vol.~***, No.~***, MONTH~YEAR}%
{Raeisi \MakeLowercase{\textit{et al.}}: plug-in Reconfigurable Intelligent Surface}

\maketitle
\vspace{-2em}

\begin{abstract}
This paper presents a promising design concept for reconfigurable intelligent surfaces (RISs), named plug-in RIS, wherein the RIS is plugged into an appropriate position in the environment, adjusted once according to the location of both base station and blocked region, and operates with fixed beams to enhance the system performance.
The plug-in RIS is a novel system design, streamlining RIS-assisted millimeter-wave (mmWave) communication without requiring decoupling two parts of the end-to-end channel, traditional control signal transmission, and online RIS configuration. In plug-in RIS-aided transmission, the transmitter efficiently activates specific regions of the divided large RIS by employing hybrid beamforming techniques, each with predetermined phase adjustments tailored to reflect signals to desired user locations. This user-centric approach enhances connectivity and overall user experience by dynamically illuminating the targeted user based on location.
By introducing plug-in RIS's theoretical framework, design principles, and performance evaluation, we demonstrate its potential to revolutionize mmWave communications for the limited channel state information (CSI) scenarios. Simulation results illustrate that plug-in RIS provides power/cost-efficient solutions to overcome blockage in the mmWave communication system and a striking convergence in average bit error rate and achievable rate performance with traditional full CSI-enabled RIS solutions.
\end{abstract}

\vspace{-1em}

\begin{IEEEkeywords}
Reconfigurable intelligent surface, millimeter wave, energy efficiency, massive MIMO, 6G.
\end{IEEEkeywords}

\vspace{-1em}

\IEEEpeerreviewmaketitle

\vspace{-0.5em}
\section{Introduction}
\vspace{-0.5em}

\IEEEPARstart{T}{he} requirements posed by the next generation of communication networks, including massive connectivity, ultra-reliability, low latency, and energy efficiency, necessitate exploring innovative strategies and technologies to enhance existing communication systems, enabling the deployment of ubiquitous connected devices. Reconfigurable intelligent surfaces (RISs), a promising technology that acquired substantial attention from research community, has emerged as a potential candidate for incorporation into the next-generation wireless communication networks \cite{basar2019wireless}. By intelligently manipulating incident signals, RISs enable a smart environment through controlled scattering paths, significantly improving performance. The RIS-aided virtual channel between the terminals, making the channel adaptable for boosting communication performance in various scenarios or circumventing blockages that impede communication, a common occurrence in millimeter-wave (mmWave) communication systems due to signal vulnerability to the blockage and severe path loss. These blocked regions, termed dead zones, can be addressed by properly configuring the RIS phase shifts based on channel state information (CSI), allowing the RIS to redirect signals around obstructions and serve users within the dead zones.
It is worth mentioning that CSI acquisition and phase shift configuration are key limiting factors significantly affecting the RIS design and system performance.
Hence, the following sub-section elaborates on different RIS designs in the existing literature, mostly focusing on CSI acquisition methods for RIS phase adjustment \cite{8683663, 9103231, 10053657, 9322519, 8879620, 9400843, 9149146, 9614196, 9681847, 9685434, 9370097, 9529045, 10065831}.

\begin{table*}
\centering
    \captionsetup{justification=centering}
    \caption{Comparison among different RIS structures available in the existing literature and the proposed plug-in RIS. 
(\textbf{Abbreviations}: Y = yes; N = no; FP = fully passive; SP = semi-passive; H = high; M = medium; L = low.)}

\arrayrulecolor{cyan}
\begin{tabular}{ | m{3.5 cm} || c c c c c c c c c c c c c | m{1cm} | }
   \hline
 \rowcolor{cyan!20} & \textbf{\cite{8683663}} & \textbf{\cite{9103231}} & \textbf{\cite{10053657}} & \textbf{\cite{9322519}} & \textbf{\cite{8879620}} & \textbf{\cite{9400843}} & \textbf{\cite{9149146}} & \textbf{\cite{9614196}} & \textbf{\cite{9681847}} & \textbf{\cite{9685434}} & \textbf{\cite{9370097}} & \textbf{\cite{9529045}} & \textbf{\cite{10065831}} & \cellcolor{teal!20}\textbf{Plug-in RIS} \\ 
 \hline\hline

 %\cellcolor{cyan!20} Massive implementation & \color{red}{\ding{55}} & \color{red}{\ding{55}} & \color{red}{\ding{55}} & \color{red}{\ding{55}} & \color{red}{\ding{55}} & \color{red}{\ding{55}} & \color{red}{\ding{55}} & \color{red}{\ding{55}} & \color{red}{\ding{55}} & \color{red}{\ding{55}} & \cellcolor{teal!10}\color{green}{\ding{51}} \\
 
 \cellcolor{cyan!20} \textbf{RIS type}: & FP & 
 FP & 
 FP &
 FP &
 FP &
 FP &
 FP & 
 FP &
 FP &
 SP &
 SP & 
 SP & 
 SP &
 \cellcolor{teal!10} FP \\ 
 
 \cellcolor{cyan!20} \textbf{Channel estimation carries out by}: & BS & UE & BS & UE & UE & BS & UE & BS & BS & RIS & RIS & BS/UE & BS/UE &\cellcolor{teal!10} BS\\

 \hline \hline
 
 \cellcolor{cyan!20} \textbf{Control link availability}: & \cellcolor{cyan!5}\color{red}{Y} & \cellcolor{cyan!5}\color{red}{Y} & \cellcolor{cyan!5}\color{red}{Y} & \cellcolor{cyan!5}\color{red}{Y} & \cellcolor{cyan!5}\color{red}{Y} & \cellcolor{cyan!5}\color{red}{Y} & \cellcolor{cyan!5}\color{red}{Y} & \cellcolor{cyan!5}\color{red}{Y} & \cellcolor{cyan!5}\color{red}{Y} & \cellcolor{cyan!5}\color{teal}{N} & \cellcolor{cyan!5}\color{teal}{N} & \cellcolor{cyan!5}\color{red}{Y} & \cellcolor{cyan!5}\color{red}{Y} & \cellcolor{teal!10}\color{teal}{N}\\

 \cellcolor{cyan!20} \textbf{Implementation complexity}: & \cellcolor{cyan!5}\color{orange}{M} & \cellcolor{cyan!5}\color{orange}{M} & \cellcolor{cyan!5}\color{red}{H} & \cellcolor{cyan!5}\color{red}{H} &  \cellcolor{cyan!5}\color{red}{H} & \cellcolor{cyan!5}\color{orange}{M} & \cellcolor{cyan!5}\color{red}{H} & \cellcolor{cyan!5}\color{orange}{M} & \cellcolor{cyan!5}\color{orange}{M} & \cellcolor{cyan!5}\color{red}{H} & \cellcolor{cyan!5}\color{red}{H} & 
 \cellcolor{cyan!5}\color{orange}{M} & \cellcolor{cyan!5}\color{orange}{M} &
 \cellcolor{teal!10}\color{teal}{L} \\ 

 \cellcolor{cyan!20} \textbf{Signalling overhead}: & \cellcolor{cyan!5}\color{orange}{M} & \cellcolor{cyan!5}\color{orange}{M} & \cellcolor{cyan!5}\color{red}{H} & \cellcolor{cyan!5}\color{orange}{M} &  \cellcolor{cyan!5}\color{orange}{M} & \cellcolor{cyan!5}\color{orange}{M} & \cellcolor{cyan!5}\color{red}{H} & \cellcolor{cyan!5}\color{orange}{M} & \cellcolor{cyan!5}\color{orange}{M} & \cellcolor{cyan!5}\color{teal}{L} & \cellcolor{cyan!5}\color{teal}{L} & 
 \cellcolor{cyan!5}\color{orange}{M} & \cellcolor{cyan!5}\color{orange}{M} &
 \cellcolor{teal!10}\color{teal}{L} \\ 
 \hline
 
\end{tabular}
\label{Tab: Different RIS structures comparison}
\vspace{-1.8em}
\end{table*}

\vspace{-1.5em}

\subsection{Related Works}\label{Sec: Related works}
\vspace{-0.5em}
Two critical aspects distinguishing RIS deployment are its passive nature and ease of implementation, as highlighted in \cite{basar2019wireless}. Specifically, it is assumed that an ideal RIS does not require an external power source and can be easily adapted for implementation in various environments. 
By taking these attributes into account, we classify the existing RIS structures into two main categories: Fully passive RIS and semi-passive RIS\footnote{In the literature, there is a type of RIS called active RIS. However, active RIS is not equipped with baseband components like RF chains; it possesses the added capability of signal amplification \cite{9998527}. In respect of the channel estimation methodology, it is not different from the fully passive RIS.}.

\subsubsection{Fully passive RIS}

The RIS configuration depends on CSI availability for both the base station (BS)-RIS and RIS-user equipment (UE) channels, as assumed in most RIS studies in the literature. In a fully passive RIS setup, where the RIS lacks baseband processing capability, either the BS or the UE needs to estimate the end-to-end channel and decouple two parts of it, i.e., BS-RIS and RIS-UE.
Numerous studies in the existing literature have addressed the channel estimation problem for fully passive RIS systems \cite{8683663, 9103231, 10053657, 9322519, 8879620, 9400843, 9149146, 9614196, 9681847}.

In \cite{8683663}, the authors proposed a channel estimation method for passive RIS-assisted power transfer systems, assuming that the power beacon manages all computational tasks due to constraints in UE. {Besides, the RIS is deployed near the UE, and its controller is connected to the power beacon for programming and operating; therefore, a relatively long-distance dedicated control link is required to guarantee error-free control signaling.}
In \cite{9103231}, the authors studied channel estimation in an RIS-assisted mmWave communication system, aiming for joint active and passive beamforming for single-antenna UEs; nevertheless, dealing with multi-antenna UEs adds extra computational load for CSI acquisition and passive phase shift vector optimization.
The study of \cite{10053657} proposed a two-step channel estimation protocol for acquiring two parts of the end-to-end channel within a multi-user (MU) mmWave MIMO system, in which the BS is responsible for channel estimation. {In addition, the authors also considered time allocation and RIS phase shift adjustment for the uplink channel estimation phase.}
Meanwhile, the study of \cite{9322519} dives into channel estimation problem within a point-to-point (P2P) RIS-assisted mmWave communication system. Under the suggested protocol, the UE undertakes {a non-convex optimization problem to obtain BS-RIS and RIS-UE channels.}
%In \cite{9133156}, a channel estimation algorithm is presented to reduce overhead in an RIS-assisted MU MIMO system by exploiting hidden channel sparsity and slow-varying channel components.
Similarly, a channel estimation algorithm for passive large intelligent metasurfaces (LIM) (which is a very large RIS) is introduced in \cite{8879620}, employing sparse matrix factorization and matrix completion. {It is worth noting that running an algorithm in one of the transceivers is necessary to obtain BS-LIM and LIM-UE channels.}
In \cite{9400843}, the authors considered UE mobility and proposed a two-time scale channel estimation framework, leveraging the quasi-static BS-RIS channel while the RIS-UE is a low-dimensional mobile channel. {Accordingly, the BS-RIS channel with a large dimension needs to be estimated less frequently, while the small dimension RIS-UE channel must be estimated frequently. Consequently, the pilot overhead is reduced in a long-term perspective.}
The study of \cite{9149146} investigated channel estimation for broadband RIS-aided mmWave massive MIMO systems utilizing compressive sensing (CS) method. {Specifically, the authors assumed that the line-of-sight (LOS) dominated BS-RIS channel is known; hence, in order to jointly estimate BS-UE and RIS-UE channels, the pilot signals can be designed accordingly.} 
{The study of \cite{9614196, 9681847} proposed a novel framework for low-complex channel estimation and passive beamforming for both single-user and MU scenarios. The authors proposed that the composed superposed channel is estimated simultaneously instead of obtaining each direct link and reflective link separately. Additionally, a set of pre-adjusted training phase profiles can be adopted to acquire passive beamforming at the RIS.}

This literature review reveals that significant efforts have been made to estimate the end-to-end channel and separate it into BS-RIS and RIS-UE components in RIS-assisted mmWave communication systems. 
It is worth emphasizing that in fully passive RIS design, channel estimation/decoupling and optimizing phase shifts are done at one of the endpoints, which necessitates two key considerations.
Firstly, establishing a dedicated control link is essential to facilitate the transmission of configuration information from the responsible endpoint to the RIS. Secondly, allocating sufficient resources to accommodate the computational tasks associated with channel estimation/decoupling and RIS phase shift optimization at the responsible endpoint is essential.
However, it is worth noting that both of these requirements have potential drawbacks. 
Introducing a dedicated control link can add complexity to the system and potentially increase the overall latency stemming from reconfiguration latency \cite{etsigrris003}, while allocating substantial resources, particularly at the UE, may not be a cost/power-efficient solution. Similarly, if the BS assumes the responsibility for computational tasks, extending this approach to an MU scenario can significantly elevate system complexity and overhead, potentially posing challenges to the overall feasibility of the system.

\vspace{-0.3em}
\subsubsection{Semi-passive RIS}

Despite extensive endeavors to acquire individual cascaded channels for fully passive RIS configurations, challenges arise from the inefficient cascaded channel decoupling process due to the passive nature of RIS elements \cite{9685434}. In order to address this issue, semi-passive RIS structure is introduced in \cite{9370097}. 
In the semi-passive RIS design, a fraction of RIS elements are connected to baseband components, enabling the RIS to perform channel estimation through its integrated baseband components. 
Consequently, the RIS can accurately estimate BS-RIS and RIS-UE channels separately and adjust itself without being controlled by other terminals, i.e., the transmitter (Tx) or the receiver (Rx) \cite{9370097}. 
In \cite{9685434}, the authors proposed an algorithm for semi-passive RIS channel estimation, leveraging sparsity in both spatial and frequency domains of mmWave channels. As shown in \cite{9685434}, connecting only $8\%$ of RIS elements to the baseband components yields more accurate channel estimation than the considered benchmarks.
The study of \cite{9529045} proposed two practical residual neural networks to accurately estimate the channel for semi-passive RIS-aided communication systems. 
Similarly, the authors in \cite{10065831} proposed a channel estimation method for semi-passive RIS-assisted systems based on super-resolution neural networks. 
It is worth noting that even though the studies of \cite{9529045, 10065831} exploit semi-passive RIS, a significant portion of the computational load for channel estimation is carried out by the transceivers to simplify the RIS structure. Consequently, they inherit some of the drawbacks associated with both fully passive and semi-passive RIS designs.

Table \ref{Tab: Different RIS structures comparison} provides an overview of the structural characteristics of various RIS designs discussed above. The principal advantage of the semi-passive RIS design results from its reduced reliance on establishing a dedicated control link. Moreover, this approach eliminates the additional overhead and complexity imposed on the transceivers, as the RIS efficiently handles the computational tasks. 
Nonetheless, the semi-passive RIS configuration deviates from two fundamental characteristics highlighted previously: The passive nature of the RIS and its simplicity of implementation. While it is evident that the semi-passive RIS does not adhere to the passive attribute, it is crucial to remember that including baseband components introduces a costly and power-intensive structure, which, in turn, hinders easy and widespread deployment.%\footnote{It is worth mentioning that the massive deployment of RIS holds significant importance in mmWave communication systems. This relevance stems from the vulnerability of mmWave signals to blockages, including those created by human bodies. In other words, in a mmWave environment, besides the obstacles created by the objects, human is a dynamic obstacle that can be situated anywhere and at any moment, emphasizing the importance of massive RIS implementation within the mmWave communication system to enhance the coverage and keep the users online.}.

\vspace{-1em}

\subsection{Motivations and contributions}
\vspace{-0.5em}

As summarized in Table \ref{Tab: Different RIS structures comparison}, both fully passive and semi-passive RIS configurations exhibit drawbacks that pose challenges to their practical implementation.
%The fully passive RIS relies on a dedicated control link, perfect end-to-end channel estimation, and channel decoupling for obtaining cascaded channels, which increases the overall overhead. 
%Consequently, adding each RIS requires ensuring a reliable control link and allocating sufficient resources for channel estimation/decoupling, thus endangering the easy deployment of the RIS. 
%Besides, decoupling two parts of end-to-end channel is challenging due to the passive nature of RIS elements, which negatively affects RIS configuration and overall system performance. 
%In contrast, the semi-passive RIS can estimate individual cascaded channels separately and adjust itself owing to its baseband computational capabilities, removing the necessity for a reliable control link and allocating resources at endpoints; nevertheless, its active nature and associated costs also hinder the ease of implementation and conflict with the passive nature idealized for RIS.
Based on the existing literature, it becomes evident that there is a clear and pressing need within academia to create a novel RIS structure that fulfills the two fundamental attributes emphasized in \cite{basar2019wireless}, which strongly motivates us to introduce a fresh and innovative design, named \textit{plug-in RIS}, with the specific aim of simplified deployment and operation within mmWave communication environments. 
In this system, the RIS can be conveniently plugged into the environment to operate with fixed phase shift profiles, which are adjusted based on the location of the BS and dead zones.
Specifically, initially, the dead zone in the mmWave environment is defined to the BS (because the BS is initially not aware of which area is the dead zone) and divided into spatial segments; each spatial segment is considered to be served via a part of divided RIS, named sub-RIS. Next, we examine the location of each spatial segment and assign a fixed phase shift profile to the corresponding sub-RIS. Since both the BS and dead zone have fixed locations, the sub-RISs are not required to be configured frequently.
For activation, the BS obtains the UE's spatial segment and employs beamforming techniques to illuminate the corresponding sub-RIS, enabling control over signal reflection towards desired directions and offering a user-centric approach to enhance wireless communication performance.

Based on the aforementioned motivations, we outline our contributions as follows:
\vspace{-0.5em}
\begin{itemize}
    \item \textit{Plug-in RIS with standalone operation:} In contrast to the fully passive and semi-passive RIS designs, this study introduces an innovative approach—a plug-in RIS configuration with standalone operation—for extending communication coverage in mmWave systems. {Specifically, since mmWave communication is vulnerable to blockages, it can encounter several limited-space dead zones. In this context, developing an RIS design that facilitates widespread RIS implementation while keeping cost and complexity low and maintaining high-quality communication is essential for mmWave communication systems.} The proposed structure aligns with key attributes of RIS technology, such as, its passive nature and ease of implementation. The plug-in RIS consists of multiple sub-RIS planes, each pre-configured with fixed phase shifts, intended to cover distinct spatial segments within the dead zone. Leveraging the high beamforming gain at the {BS} in the mmWave communication systems, only one sub-RIS plane is activated in each transmission period to establish a virtual link toward the UE. This approach innovatively integrates the control mechanism within the transmitting beam, eliminating the necessity for implementing traditional control links and the associated overheads, which enhances the overall user experience by increasing the ease of implementation of the RIS in different environments.

    \item \textit{Compatibility with existing end-to-end channel estimation methods:} 
    The pre-configured phase shift design of the suggested plug-in RIS eliminates the necessity for decoupling two parts of the end-to-end channel, which alleviates the additional burden of channel decoupling on the participating terminals, enabling the easy deployment of fully passive RIS in mmWave communication systems. In other words, in the proposed plug-in RIS, obtaining the conventional end-to-end channel estimation is enough, and decoupling it into BS-RIS and RIS-UE channels is not required.

    \item \textit{Reliability analysis:} 
    We derive a closed-form expression for the upper bound of the average bit error rate (ABER) for the proposed plug-in RIS. Our analytical methodology employs the union-bound and pairwise error probability (PEP) to validate the precision of our simulation outcomes. Our results illustrate that these analytical derivations offer a tight upper bound on the ABER.
    
    \item \textit{Simulation insights:}
    We conduct extensive simulations to evaluate the performance of the proposed plug-in RIS in terms of {coverage performance}, ABER, achievable rate, energy efficiency, and detector complexity. Computer simulation outcomes indicate that the plug-in RIS exhibits promising {coverage performance}, showing close ABER and achievable rate results to the semi-passive design. Moreover, the plug-in RIS outperforms the benchmark designs in terms of energy efficiency and detector complexity, highlighting the superiority of our proposed approach, taking a step forward to the practical implementation of fully passive RIS. {Furthermore, analyzing received signal-to-interference-plus-noise-ratio (SINR) in an MU scenario reveals that plug-in RIS can also be adapted to serve more than a single UE without considerable performance loss compared to semi-passive RIS.}
    
\end{itemize}

The rest of this article is organized as follows. Section \ref{Sec: System,channel,signal model} outlines the system, channel, and signal models. The design of the proposed plug-in RIS is elaborated in Section \ref{Sec: Plug-in RIS Structure}. Analytical expression as an upper bound for ABER performance is derived in Section \ref{Sec: Theoretical analysis}. Simulation results are displayed in Sections \ref{Sec: Sim results} and \ref{Sec:Illustrative Results (MU)} for single-user and multi-user cases, respectively, followed by the conclusion of this study in Section \ref{Sec: Conclusion}.

\blfootnote{
\textit{Notation}: In this paper, bold lowercase and uppercase letters denote column vectors and matrices, respectively. The symbols $(.)^H$, $(.)^T$, $(.)^*$ $|.|$, $||.||$, and $\diag(.)$ stand for Hermitian, transpose, conjugate, absolute value, norm, and diagonalization, respectively. The operator $\mathbb{E}[.]$ shows the expected value, $\mathcal{R}(.)$ denotes the real component of a complex number, and operator $\circ$ represents the Hadamard (element-wise) product. The notation $\mathcal{CN}(\mu,\sigma^2)$ represents the complex Gaussian distribution with mean $\mu$ and variance $\sigma^2$, while $\mathcal{N}(\mu,\sigma^2)$ denotes the Gaussian distribution with mean $\mu$ and variance $\sigma^2$. The identity matrix of dimension $n$ is represented as $\mathbf{I}_n$, while the vector comprising all zeros and ones of size $n$ is denoted by $\mathbf{0}_n$ and $\mathbf{1}_n$, respectively. Additionally, $\mathcal{U}(a,b)$ indicates a uniform distribution parameterized by $a$ and $b$. The $Q$-function is denoted and defines as $Q(x) = \frac{1}{\sqrt{2 \pi}}\int_x^{\infty} \exp{(-\frac{u^2}{2})} du$. It is worth mentioning that in this paper, the variable $i$ is employed as a local variable, and its assigned value within each equation is applicable only to that equation.}

\vspace{-3em}

\section{System, Channel, and Signal Model}
\label{Sec: System,channel,signal model}
%\vspace{-0.5em}
This section describes the system model for the mmWave massive MIMO communication system aided by the proposed plug-in RIS. Afterward, we present the utilized channel model and the foundational assumptions supporting our suggested design. Next, we elaborate on the signal model and introduce a suitable maximum likelihood (ML) detector. {Lastly, the proposed plug-in RIS is extended to an MU scenario considering inter-user interference and a baseband (BB) precoding stage to combat this interference.}

\vspace{-1em}

\subsection{System Model} 
\vspace{-0.5em}

As depicted in Fig. \ref{fig:SystemModel}, we investigate a P2P mmWave communication system aided by a set of sub-RIS planes with pre-configured fixed phase shifts.
%\footnote{The proposed plug-in RIS offers easy adaptability to an MU setup in both uplink and downlink directions. This adaptability is facilitated by its independent operation and energy-efficient attributes while avoiding an additional computational load on the transceivers. Users can be categorized into distinct clusters in an MU scenario based on their spatial segments. Users within the same segment possess similar angular properties, allowing us to use a fixed phase shift sub-RIS plane to serve them. However, the realization of such a scenario necessitates the development of a baseband precoder and combiner aimed at mitigating interference among users. This aspect exceeds the current scope of our study and is reserved for future research endeavors.}.
Due to propagation loss challenges, the direct BS-UE link is assumed to be blocked, causing dead zones in the environment; therefore, the sub-RIS planes are plugged into the surrounding environment to assist UEs located in the dead zone, ensure constant connectivity and guarantee LOS links in the BS-RIS and RIS-UE channels. Specifically, a dead zone can be covered using one or more sub-RISs, each covering a distinct spatial segment of the dead zone with a fixed beam.
%\footnote{Increasing the subdivision of a dead zone into smaller segments necessitates the incorporation of additional sub-RIS planes, leading to potential enhancements in performance. Further elaboration on the structure of the plug-in RIS can be found in Section \ref{Sec: Plug-in RIS Structure}.}. 
The set of sub-RISs constitutes the plug-in RIS entity. 
During implementation, each sub-RIS is assigned to a spatial segment in the dead zone; therefore, they are adjusted according to the angular information of BS and the spatial segments. Specifically, for each spatial segment, we consider fixed angular information that represents it, and since this angular information is constant, the sub-RISs are not required to be adjusted frequently.
%\footnote{The BS is typically positioned for maximal environmental coverage. Nonetheless, mmWave vulnerability to blockages can result in dead zones in the environment; for example, parallel indoor hallways. Installing an RIS proves more cost/energy-effective than deploying an extra BS in such a scenario. Moreover, these dead zones are usually small regions; hence, employing a plug-in RIS with fixed beams is more cost/energy efficient and less complex than a usual RIS with a regular phase shift configuration for covering a wide area.} 
During the transmission, based on the UE's location and with the aid of massive MIMO's beamforming gain at the BS, the transmitted beam footprint only activates the corresponding sub-RIS. Both the BS and UE are equipped with uniform rectangular arrays (URA); $N_t$ and $N_r$ denote the number of antenna elements at the BS and UE, respectively. To address severe path loss while minimizing the need for costly RF chains, analog beamformers/combiners are employed at the BS/UE. This approach allows a single RF chain at the BS and UE, supporting a single stream \cite{10188340}.
Our proposed plug-in RIS comprises $K$ sub-RISs, each containing $M_{k} \  (k = 1, \dots, K)$ passive reflecting elements.

\begin{figure}
    \centering
    \includegraphics[scale = 0.1]{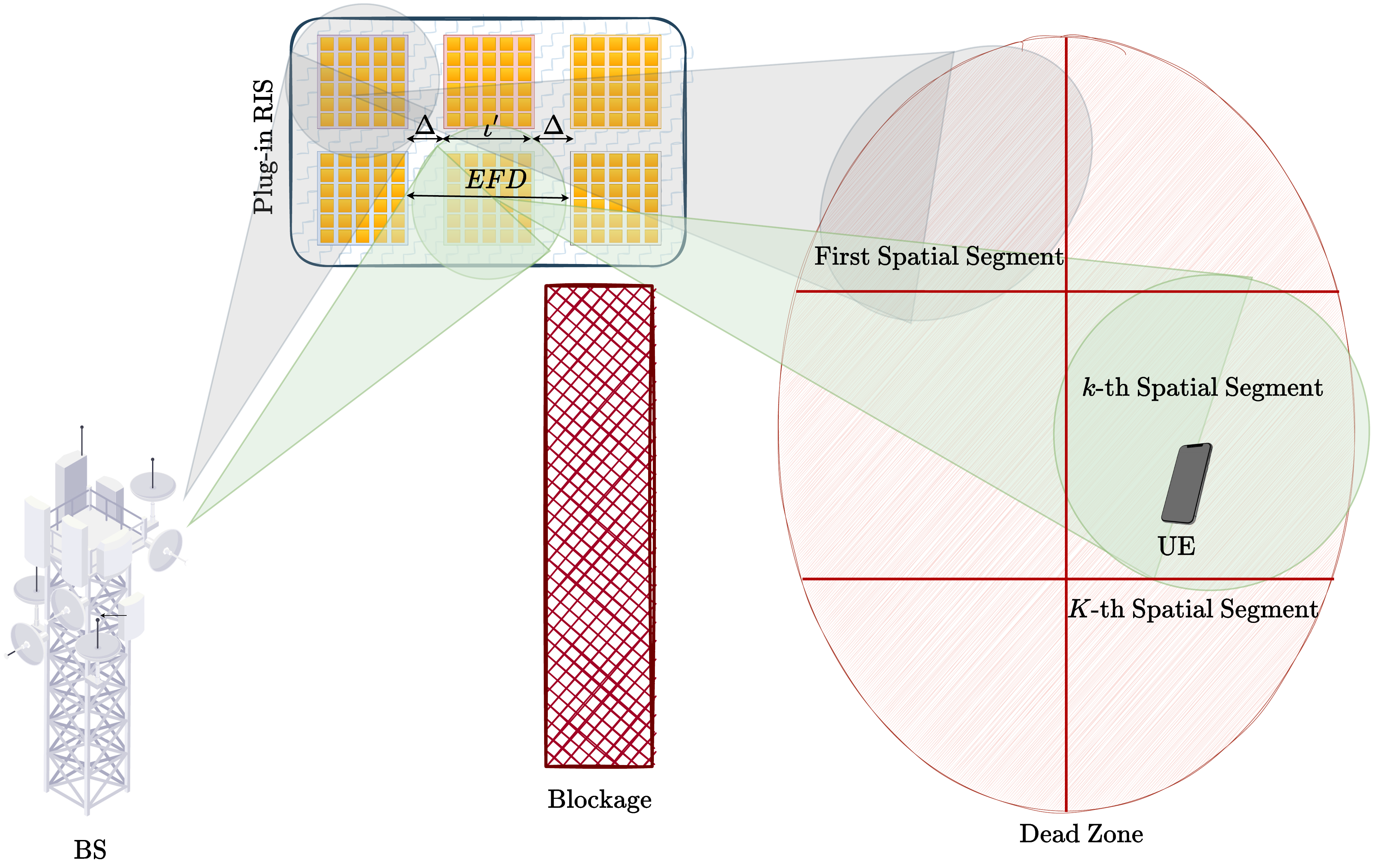}
    \caption{System model for proposed plug-in RIS.}
    \label{fig:SystemModel}
    \vspace{-1em}
\end{figure}

\vspace{-1em}
\subsection{Channel Model}
In this paper, we adopt the {three-dimensional (3D) geometry-based} channel model for the BS-RIS and RIS-UE channels as follows \cite{wang2020joint, wang2020intelligent, ying2020gmd, raeisi2022cluster, koc2020hybrid}:
\begin{equation}
\begin{split}
    \mathbf{G}_k = 
    & \sqrt{N_t M_k} \alpha_k \mathbf{a}_{r,ris}(\varphi_{r,ris}^k,\vartheta_{r,ris}^k) \mathbf{a}_t^H(\varphi_t,\vartheta_t),
\end{split}
\end{equation}
\begin{equation}
\begin{split}
    \mathbf{R}_k = 
    & \sqrt{M_k N_r} \beta_k \mathbf{a}_{r}(\varphi_{r},\vartheta_{r}) \mathbf{a}_{t,ris}^H(\varphi_{t,ris}^k,\vartheta_{t,ris}^k),
\end{split}
\end{equation}
where $\mathbf{G}_k \in \mathbb{C}^{M_{k} \times N_t}$ ($\mathbf{R}_k \in \mathbb{C}^{N_r \times M_{k}}$) is the channel between BS and $k$-th sub-RIS ($k$-th sub-RIS and UE).
The parameter $\alpha_k$ ($\beta_k$) is the LOS channel gain of $\mathbf{G}_k$ ($\mathbf{R}_k$) and follows a complex normal distribution $\mathcal{CN} (0, 10^{-0.1 PL(d)})$, with $PL(d)$ representing path loss and $d$ indicating the distance between the associated terminals. The array response (steering) vector is denoted as $\mathbf{a}(.)$, while $\varphi \in \mathcal{U}(a_{\varphi}, b_{\varphi})$ and $\vartheta \in \mathcal{U}(a_{\vartheta}, b_{\vartheta})$ stand for azimuth and elevation angles, respectively. 
Specifically, $\varphi_t$ ($\vartheta_t$) represents the azimuth (elevation) angle of departure (AoD) at the BS, and $\varphi_r$ ($\vartheta_r$) indicates the azimuth (elevation) angle of arrival (AoA) at the UE. {Besides}, $\varphi_{t,ris}^k$ ($\vartheta_{t,ris}^k$) shows azimuth (elevation) AoD associated with the $k$-th sub-RIS, and $\varphi_{r,ris}^k$ ($\vartheta_{r,ris}^k$) is the azimuth (elevation) AoA at the $k$-th sub-RIS.
Here, it is assumed that each channel is LOS dominated due to the presence of LOS component \cite{10176315}. Thus, non-LOS (NLOS) components are omitted, allowing for a focused investigation of the proposed plug-in RIS performance. The normalized antenna array response is defined as follows \cite{10188340, 10176315}:
\begin{equation}\label{eq: antenna array response}
    \mathbf{a}(\varphi, \vartheta) = \frac{1}{\sqrt{N}} [e^{j \mathbf{k}^T \mathbf{p}_0}, e^{j \mathbf{k}^T \mathbf{p}_1}, \dots, e^{j \mathbf{k}^T \mathbf{p}_{N-1}}]^T,
\end{equation}
where $N$ represents the total number of antenna elements/reflectors at the corresponding terminal, calculated as $N = N_x \times N_y$. Here, $N_x$ ($N_y$) refers to the number of antenna elements/reflectors along the $x$-axis ($y$-axis). 
The symbol $\mathbf{k}$ denotes the wave number, and $\mathbf{p}_n$ is associated with the structure of the antenna array, specifying the location of the $n$-th antenna element. We establish $n = n_y N_x + n_x$, where $0 \leq n_x \leq N_x - 1$ ($0 \leq n_y \leq N_y - 1$) represents the location of the antenna element/reflector along the $x$-axis ($y$-axis). For the URA configuration, the definitions of $\mathbf{k}$ and $\mathbf{p}_n$ are outlined as follows \cite{10188340, 10176315}:
\vspace{-0.5em}
\begin{equation}\label{eq: wave number}
    \mathbf{k} = \frac{2 \pi}{\lambda}[\sin{\vartheta}\cos{\varphi} \enspace \sin{\vartheta}\sin{\varphi}]^T,
    \vspace{-0.5em}
\end{equation}
\begin{equation}
    \mathbf{p}_n = [n_x \delta_x \enspace n_y \delta_y]^T,
    \vspace{-0.5em}
\end{equation}
where $\delta_x$ ($\delta_y$) represents the separation between adjacent antennas/reflectors along the $x$-axis ($y$-axis), and $\lambda$ denotes the signal wavelength. 
Note that to calculate the antenna array response for each terminal, we need to substitute its corresponding $\varphi$ and $\vartheta$ values into (\ref{eq: antenna array response}) and (\ref{eq: wave number}).
The effective end-to-end channel between the BS and UE through the $k$-th sub-RIS, denoted as $\mathbf{H}_{\text{eff},k} \in \mathbb{C}^{N_r \times N_t}$, can be defined as follows:
\begin{equation}\label{eq: effective channel}
    \mathbf{H}_{\text{eff},k} = \mathbf{R}_k \mathbf{\Psi}_k \mathbf{G}_k,
    \vspace{-0.5em}
\end{equation}
where $\mathbf{\Psi}_k \in \mathbb{C}^{M_{k} \times M_{k}}$ is the $k$-th sub-RIS phase shift matrix and {is defined as}
\begin{equation}
    \mathbf{\Psi}_k \triangleq \diag{(e^{j \psi_1}, e^{j \psi_2}, \dots, e^{j \psi_{M_{k}}})},
    \vspace{-0.5em}
\end{equation}
where $\psi_i$ is the phase shift associated with the $i$-th reflector of the corresponding sub-RIS. 
As mentioned earlier, the sub-RISs can be adjusted using the location of BS and dead zones. Lemma \ref{Lem: RIS phase shift} gives a mathematical expression for sub-RISs phase shift adjustment.

\begin{lemma}
\label{Lem: RIS phase shift}
    The phase shift matrix for the $k$-th sub-RIS can be calculated as follows:
    \begin{equation}
        \mathbf{\Psi}_k = \sqrt{M_{k}} \diag \Big( \mathbf{f}_{b,k} \circ {\mathbf{a}_{r,ris}^*}(\varphi_{r,ris}^k,\vartheta_{r,ris}^k) \Big),
    \end{equation}
    where $\mathbf{f}_{b,k}$ represents the fixed beam formed by the $k$-th sub-RIS with azimuth and elevation AoD as $\phi_{b,k}$ and $\theta_{b,k}$, respectively. $\mathbf{f}_{b,k}$ is expressed as
    \begin{equation*}
        \mathbf{f}_{b,k} = [e^{\mathbf{k}^T \mathbf{p}_0}, e^{\mathbf{k}^T \mathbf{p}_1}, \dots, e^{\mathbf{k}^T \mathbf{p}_{M_{k}-1}}],
    \end{equation*}
    where $\mathbf{k}$ is calculated as (\ref{eq: wave number}) by substituting $\varphi = \phi_{b,k}$ and $\vartheta = \theta_{b,k}$.
\end{lemma}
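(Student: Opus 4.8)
The plan is to obtain the stated $\mathbf{\Psi}_k$ as the maximizer of the reflected beamforming gain toward the segment's representative direction, under the unit-modulus constraint that a fully passive surface imposes on the diagonal entries of $\mathbf{\Psi}_k$. First I would substitute $\mathbf{G}_k$ and $\mathbf{R}_k$ into the effective channel (\ref{eq: effective channel}); since both are rank-one outer products of steering vectors, every dependence on $\mathbf{\Psi}_k$ collapses into the single scalar $\mathbf{a}_{t,ris}^H(\varphi_{t,ris}^k,\vartheta_{t,ris}^k)\,\mathbf{\Psi}_k\,\mathbf{a}_{r,ris}(\varphi_{r,ris}^k,\vartheta_{r,ris}^k)$, which multiplies the fixed outer product $\mathbf{a}_r\mathbf{a}_t^H$. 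Designing $\mathbf{\Psi}_k$ therefore reduces to choosing the $M_k$ diagonal phases that shape the reflected array factor.

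The plug-in design choice enters here: rather than aiming the reflected wave at the exact, user-dependent RIS-UE angle $(\varphi_{t,ris}^k,\vartheta_{t,ris}^k)$, the surface is locked to the \emph{fixed} representative direction $(\phi_{b,k},\theta_{b,k})$ of the spatial segment, so the relevant objective is to maximize $\big|\mathbf{a}_{t,ris}^H(\phi_{b,k},\theta_{b,k})\,\mathbf{\Psi}_k\,\mathbf{a}_{r,ris}(\varphi_{r,ris}^k,\vartheta_{r,ris}^k)\big|$ over unit-modulus diagonals. Writing the diagonal entries as $e^{j\psi_m}$ and using (\ref{eq: antenna array response})--(\ref{eq: wave number}), this objective reduces, up to a common positive factor, to a sum of unit-modulus terms $\sum_m e^{j(\psi_m+\mathbf{k}_r^T\mathbf{p}_m-\mathbf{k}_b^T\mathbf{p}_m)}$, where $\mathbf{k}_r$ and $\mathbf{k}_b$ are the wave numbers from (\ref{eq: wave number}) evaluated at the BS-RIS AoA and at the beam direction, respectively. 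A sum of unimodular terms attains maximal magnitude exactly when all summands are co-phased, which forces $\psi_m=\mathbf{k}_b^T\mathbf{p}_m-\mathbf{k}_r^T\mathbf{p}_m$ up to an immaterial common constant.

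Finally I would repackage this element-wise rule. Since $e^{-j\mathbf{k}_r^T\mathbf{p}_m}=\sqrt{M_k}\,[\mathbf{a}_{r,ris}^*(\varphi_{r,ris}^k,\vartheta_{r,ris}^k)]_m$ and $e^{j\mathbf{k}_b^T\mathbf{p}_m}=[\mathbf{f}_{b,k}]_m$ by the definition of $\mathbf{f}_{b,k}$ in the statement, the optimal entry is $e^{j\psi_m}=\sqrt{M_k}\,[\mathbf{f}_{b,k}\circ\mathbf{a}_{r,ris}^*]_m$; collecting these on the diagonal gives $\mathbf{\Psi}_k=\sqrt{M_k}\,\diag(\mathbf{f}_{b,k}\circ\mathbf{a}_{r,ris}^*)$, as claimed. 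The factor $\sqrt{M_k}$ is not cosmetic: because $\mathbf{a}_{r,ris}^*$ has entries of modulus $1/\sqrt{M_k}$ while $\mathbf{f}_{b,k}$ is entrywise unimodular, it is exactly the rescaling that restores unit modulus to each diagonal entry and thereby certifies feasibility for a passive surface. The physical reading is the familiar retrodirective one: the factor $\mathbf{a}_{r,ris}^*$ phase-conjugates the incoming BS wavefront so the elements add coherently, and $\mathbf{f}_{b,k}$ imprints the outgoing linear phase gradient of a plane wave launched toward the segment.

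I expect the only genuine subtlety to be conceptual rather than computational: justifying the substitution of the fixed segment direction $(\phi_{b,k},\theta_{b,k})$ for the true departure angle inside the objective, and making explicit that the incoming-wave conjugation remains exact even under this plug-in simplification because the BS and RIS positions---hence $(\varphi_{r,ris}^k,\vartheta_{r,ris}^k)$---are fixed by installation. I would close by noting that any offset of the actual user from the representative angle perturbs only the magnitude of $\mathbf{a}_{t,ris}^H\mathbf{\Psi}_k\mathbf{a}_{r,ris}$, leaving the design formula of Lemma~\ref{Lem: RIS phase shift} intact; keeping the sign and conjugation conventions in (\ref{eq: antenna array response})--(\ref{eq: wave number}) consistent is the one place where a careless step could flip a phase.
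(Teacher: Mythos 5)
Your proof is correct and takes essentially the same route as the paper's: the paper's ``proof'' is only a pointer to \cite[Section III-C]{ying2020gmd}, where the phase-shift design is obtained by exactly this co-phasing argument---conjugate the phase of the incident steering vector, imprint the linear phase gradient of the outgoing beam toward the segment's fixed direction $(\phi_{b,k},\theta_{b,k})$, with the factor $\sqrt{M_k}$ restoring unit modulus of the diagonal entries. Your write-up is simply a self-contained version of that argument (and it implicitly fixes the paper's typo of a missing $j$ in the exponents defining $\mathbf{f}_{b,k}$), so it can stand in place of the citation.
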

 \begin{proof}
     See \cite[Section III-C]{ying2020gmd}.
 \end{proof}
 
\vspace{-2em}
\subsection{Signal Model}

%In this subsection, we describe the signal model used in this paper. Next, we represent the ML decoder for the proposed plug-in RIS. 
As mentioned earlier, the dead zone is divided into distinct spatial segments based on its geometric shape. Given that the dead zone's location remains fixed, the spatial segments also maintain constant positions. Consequently, each spatial segment is defined by a pair of unchanging azimuth and elevation angular values. Thus, sub-RISs are adjusted according to the constant angular information of BS and spatial segments in order to cover that segment.
{For starting transmission, end-to-end CSI and the spatial segment of the UE should be obtained.}
Since the BS and sub-RISs have fixed positions, the angular information at the BS is constant and known to the BS; therefore, after transmitting pilot signals, the BS can easily find the spatial segment in which the UE is located and its associated sub-RIS. 
Furthermore, by estimating end-to-end channel, both the BS and UE can adjust their analog beamformers/combiners accordingly. 
The effective end-to-end channel can be described as (\ref{eq: effective channel}).
Therefore, the received signal after passing through the combiner is represented as
\begin{equation}
    y = \sqrt{P} G_t G_r \mathbf{f}_r^H \mathbf{H}_{\text{eff},k} \mathbf{f}_t s + \mathbf{f}_r^H \mathbf{n},
    \vspace{-0.5em}
\end{equation}
where $P$ stands for the transmit power, $G_i = g_e + 10 \log_{10}(N_i)$, wherein $i \in \{t, r \}$, represents the array gain, with $g_e$ denoting the gain of each antenna element \cite{hannan1964element, sarieddeen2019terahertz}. The symbol $s$ corresponds to the transmitted $\mathcal{M}$-ary symbol, and $\mathbf{n} \in \mathbb{C}^{N_r \times 1} \sim \mathcal{C}\mathcal{N}(0,\sigma^2)$ is the additive noise component. Moreover, $\mathbf{f}_t \in \mathbb{C}^{N_t \times 1}$ denotes the analog beamformer at the BS, while $\mathbf{f}_r \in \mathbb{C}^{N_r \times 1}$ represents the analog combiner at the UE. For simplification, continuous phase shifters are assumed at the BS and UE; thus, analog beamformer/combiner can be adjusted as $\mathbf{f}_t = \mathbf{a}_t(\varphi_t, \vartheta_t)$ and $\mathbf{f}_r = \mathbf{a}_r(\varphi_r, \vartheta_r)$.
Utilizing the ML detector at the UE, the estimated symbol is expressed as
\begin{equation}\label{eq:Detector}
    \hat{s} = \argmin_{\forall s \in \mathcal{S}} |y - \sqrt{P} G_t G_r \mathbf{f}_r^H \mathbf{H}_{\text{eff},k} \mathbf{f}_t s|^2,
\end{equation}
where $\mathcal{S}$ is the set of $\mathcal{M}$-ary constellation points.

\vspace{-1em}
{
\subsection{MU Plug-in RIS}
This subsection investigates exploiting of plug-in RIS in MU scenarios to serve $J$ single antenna UEs. We introduce the channel model for the MU-plug-in RIS-assisted system, then present the signal model and ultimately derive the SINR term.
        
The overall effective channel matrix, denoted as $\mathbf{H} \in \mathbb{C}^{J \times N_t}$, can be obtained as follows:
\begin{equation}
    \mathbf{H} = \Big[ \mathbf{h}_{\textrm{eff},1}^T, \dots, \mathbf{h}_{\textrm{eff},J}^T \Big]^T,
\end{equation}
where $\mathbf{h}_{\textrm{eff},j} \in \mathbb{C}^{1 \times N_t}$, ($j \in \{ 1, \dots, J \}$), represents the end-to-end effective channel between the BS and the $j$-th UE. This can be computed according to (\ref{eq: effective channel}), considering single antenna UEs, i.e., $N_r = 1$. It is worth mentioning that for clarity, we do not include sub-RIS index $k$ in this subsection; however, each UE initiates the communication through a sub-RIS, which can be the same or different from other UEs.
The effective channel seen from the BS's BB stage can be represented as $\mathbfcal{H} = \mathbf{H} \mathbf{F_t}$, where $\mathbf{F_t} = \left[ \mathbf{f}_{t,1}, \dots, \mathbf{f}_{t,J} \right]$ is the transmit analog active beamforming matrix. Here, $\mathbf{f}_{t,j} \in \mathbb{C}^{N_t \times 1}$ denotes the transmit analog beamforming vector corresponding to the $j$-th UE.
Accordingly, the BB precoding matrix is calculated as $\mathbf{B} = \varepsilon \mathbf{W} \mathbfcal{H}^H$ where $\mathbf{W} = (\mathbfcal{H}^H \mathbfcal{H} + J \alpha \mathbf{I}_J)^{-1}$ \cite{8891298}. Here, $\alpha = \sigma^2/P$ is the regularization parameter, and $\varepsilon = \sqrt{P_T / \Trace \big( \mathbfcal{H} \mathbf{W}^2 \mathbfcal{H}^H \big)}$ is the normalization scalar \cite{8891298}. Therefore, the received signal in the $j$-th UE is given by:
\begin{equation}
\begin{split}
    y_k = & \underbrace{\varepsilon \mathbf{H}_{\textrm{eff},j} \mathbf{F_t} \mathbf{W} \mathbf{F}^H \mathbf{H}_{\textrm{eff},j}^H s_j}_{\textrm{Desired Signal}} \\ 
    & + \ \underbrace{\varepsilon \sum_{q = 1}^J \mathbf{H}_{\textrm{eff},j} \mathbf{F_t} \mathbf{W} \mathbf{F}^H \mathbf{H}_{\textrm{eff},q}^H s_q}_{\textrm{Interference}} \ + \ \underbrace{\mathbf{n}_j}_{\textrm{Noise}}.
\end{split}
\end{equation}
Finally, the SINR can be derived as follows:
\begin{equation}
    \textrm{SINR} = \frac{|\mathbf{H}_{\textrm{eff},j} \mathbf{F_t} \mathbf{W} \mathbf{F}^H \mathbf{H}_{\textrm{eff},j}^H|^2}{|| \mathbf{H}_{\textrm{eff},j} \mathbf{F_t} \mathbf{W} \mathbfcal{H}_{[j]}||^2 \ + \ \mathbf{n}_j},
\end{equation}
where $\mathbfcal{H}_{[j]} = [\mathbf{h}_{\textrm{eff},1}^T, \dots, \mathbf{h}_{\textrm{eff},j - 1}^T, \mathbf{h}_{\textrm{eff},j + 1}^T, \dots, \mathbf{h}_{\textrm{eff},J}^T]^T \mathbf{F_t}$ is the reduced-sized effective interference channel matrix.
}

\vspace{-1em}
\section{Proposed Plug-in RIS Structure}
\label{Sec: Plug-in RIS Structure}
\vspace{-0.5em}

This section illustrates the structure of the proposed plug-in RIS. 
%As mentioned earlier, the plug-in RIS eliminates the need for implementing the conventional control link, offering an alternative control mechanism by leveraging mmWave beamforming gain. 
In the plug-in RIS structure, sub-RIS units are placed with appropriate spacing, each pre-configured to cover a specific spatial segment within a {known dead zone}. 
Based on environmental blockage and path loss conditions, the sub-RIS units can be positioned on either the BS or UE side. The spacing between sub-RIS units should increase as they are deployed farther from the BS. We subsequently formulate the minimum spacing between sub-RIS units. 
%Additionally, in Section \ref{Sec: Sim results}, we illustrate the dead zone spatial segments through a graphical example using the simulation setup parameters.

Maintaining an appropriate spacing between sub-RIS units is crucial for reducing power leakage to other segments within the dead zone, which is particularly beneficial in MU scenarios as it leads to decreased interference.
Leveraging the beamforming gain offered by the large antenna array at the BS, forming a narrow beam precisely directed to a specific spatial area becomes feasible. This beam's coverage area, termed footprint, is determined by the distance from the BS and the emitted signal's beamwidth. Exploiting this property of large arrays enables the BS to illuminate only the corresponding sub-RIS based on the UE's location. 
In Fig. \ref{fig:SystemModel}, an illustration of adjacent sub-RISs separated by $\Delta$ is shown.
To establish a closed-form expression for $\Delta$, we initially define the length of each side of the sub-RIS along the $i$-axis  (where $i \in \{ x,y \}$) as $\iota_i = M_i \delta_i$.
Note that $M_i$ represents the number of reflectors across the $i$-axis, and for simplicity in notation, we eliminated the sub-RIS's index.
The footprint of the received beam should exclusively cover the intended sub-RIS. Assuming uniform spacing between sub-RISs, the effective footprint diameter (EFD) can be computed as $\textrm{EFD} = 2\Delta + \iota$. Consequently, $\Delta$ can be derived as follows:
\begin{equation}\label{eq: sub-RIS spacing}
    \Delta = \frac{\textrm{EFD} - \iota}{2}.
\end{equation}

In order to determine the EFD, we consider half-power beamwidth (HPBW) that encompasses an effective portion of the radiated power \cite{gopi2020intelligent}. It is worth mentioning that an antenna can differentiate between two adjacent sources if the angular separation between them exceeds half of the first-null beamwidth (FNBW), which is approximately equivalent to HPBW, i.e., $\textrm{HPBW} \approx \frac{\textrm{FNBW}}{2}$ \cite{balanis2015antenna}. The HPBW of an antenna array along the $i$-axis ($i \in \{ x, y \}$) can be calculated using the following formula \cite{van2002optimum}:
\begin{equation}
    \textrm{HPBW} \approx 0.891 \frac{\lambda}{N_i \delta_i}.
\end{equation}

\begin{lemma}
    The EFD of the transmitted beam from the Tx at a distance of $d$ can be computed as follows:
    \begin{equation}\label{eq: EFD exact}
        \textrm{EFD} = 2d \times \frac{\sin{(\frac{\textrm{HPBW}}{2})}}{\cos{(\frac{\textrm{HPBW}}{2} + \theta_0)}},
    \end{equation}
    where $\theta_0$ represents the angle of the incident signal at the RIS with respect to the RIS broadside direction.
\end{lemma}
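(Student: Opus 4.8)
The plan is to reduce the three-dimensional illumination geometry to a planar problem in the cut that contains the transmit array, the central ray of the beam, and the RIS broadside, and then solve the resulting triangle by elementary trigonometry. I would place the Tx at a point $A$, let the central ray strike the RIS at the footprint center $P$ with $|AP| = d$, and measure this ray against the RIS broadside so that it makes the incidence angle $\theta_0$. Because the half-power beamwidth subtends the full angular width $\textrm{HPBW}$ at $A$, the two bounding rays of the beam deviate from the central ray by $\pm \frac{\textrm{HPBW}}{2}$, so they meet the surface at the incidence angles $\theta_0 \pm \frac{\textrm{HPBW}}{2}$ and intersect the RIS plane at two points $B$ and $B'$ that delimit the footprint along the considered axis.

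First I would compute the two half-widths $|PB|$ and $|PB'|$ separately. In the triangle $APB'$ formed by the Tx, the footprint center, and the far edge, the angle at $A$ is $\frac{\textrm{HPBW}}{2}$, while the angle at $P$ between the central ray and the RIS plane is $\frac{\pi}{2} + \theta_0$; the remaining angle at $B'$ is therefore $\frac{\pi}{2} - \theta_0 - \frac{\textrm{HPBW}}{2}$. Applying the law of sines then gives
\begin{equation*}
    |PB'| = \frac{d \sin\!\left(\frac{\textrm{HPBW}}{2}\right)}{\cos\!\left(\theta_0 + \frac{\textrm{HPBW}}{2}\right)},
\end{equation*}
and the same argument applied to the near edge yields the companion expression with $\theta_0 - \frac{\textrm{HPBW}}{2}$ in the denominator.

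The exact footprint is the sum $|PB| + |PB'|$ of two unequal halves, so I expect the main obstacle to be reconciling this asymmetric result with the compact single-cosine form of (\ref{eq: EFD exact}). I would resolve it by retaining the worst case: under oblique incidence the far half $|PB'|$ dominates, since $\cos\!\left(\theta_0 + \frac{\textrm{HPBW}}{2}\right) < \cos\!\left(\theta_0 - \frac{\textrm{HPBW}}{2}\right)$, and symmetrizing the footprint about $P$ with this larger half-width gives the stated $\textrm{EFD} = 2|PB'|$. This replacement is an upper bound on the true footprint, which is exactly what is wanted for the spacing design in (\ref{eq: sub-RIS spacing}): using the larger $\textrm{EFD}$ enforces a conservative $\Delta$ that guarantees the beam cannot spill into a neighbouring sub-RIS. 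I would close by checking the sanity limits — at normal incidence $\theta_0 = 0$ the expression reduces to the symmetric broadening $2d\tan\!\left(\frac{\textrm{HPBW}}{2}\right)$, and it diverges as $\theta_0 + \frac{\textrm{HPBW}}{2} \to \frac{\pi}{2}$, i.e. as the beam grazes the surface, both of which are physically expected.
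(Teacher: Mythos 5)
Your derivation is sound, but note first that the paper never proves this lemma inline --- it simply points to \cite[Appendix B]{9386246} --- so your argument stands as a self-contained substitute for a proof the paper outsources. The core geometry checks out: placing the Tx at $A$ and the beam center at $P$ with $|AP|=d$, the law of sines in triangle $APB'$ (interior angles $\frac{\textrm{HPBW}}{2}$ at $A$, $\frac{\pi}{2}+\theta_0$ at $P$, hence $\frac{\pi}{2}-\theta_0-\frac{\textrm{HPBW}}{2}$ at $B'$) indeed gives the far half-width $d\sin(\frac{\textrm{HPBW}}{2})/\cos(\theta_0+\frac{\textrm{HPBW}}{2})$, and the near half-width carries $\cos(\theta_0-\frac{\textrm{HPBW}}{2})$ in the denominator. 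You are also right that the true footprint is the asymmetric sum of these two, so the single-cosine expression (\ref{eq: EFD exact}) equals exactly twice the far half-width; your decision to symmetrize on the dominant side turns the statement into a conservative upper bound, which is the correct reading for the purpose it serves --- the spacing $\Delta$ in (\ref{eq: sub-RIS spacing}) must prevent spillover onto neighboring sub-RISs, so over-estimating EFD is safe while under-estimating it would not be. Your limit checks are consistent with the rest of the paper: the $\theta_0=0$ reduction to $2d\tan(\frac{\textrm{HPBW}}{2})$ is precisely (\ref{eq: EFD tangent}) in Appendix \ref{Apndx: EFD Aprox Proof}, where the bound is in fact tight since both half-widths coincide at normal incidence, and the grazing-incidence divergence matches the blow-up of EFD at large $|\theta_0|$ visible in Fig.~\ref{fig:EFD_IncidentAngle}. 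The only caveat is interpretive rather than mathematical: the lemma asserts an equality, and since its proof lives in an external appendix, one cannot tell from this paper alone whether EFD there is defined as the exact asymmetric footprint (in which case (\ref{eq: EFD exact}) is only an upper bound, as your computation shows) or as the worst-case symmetric diameter (in which case your reading makes it exact); either way, the spacing rule derived from it is unaffected.
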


\begin{proof}
    See \cite[Appendix B]{9386246}.
\end{proof}

\begin{figure}
    \centering
    \includegraphics[scale = 0.37]{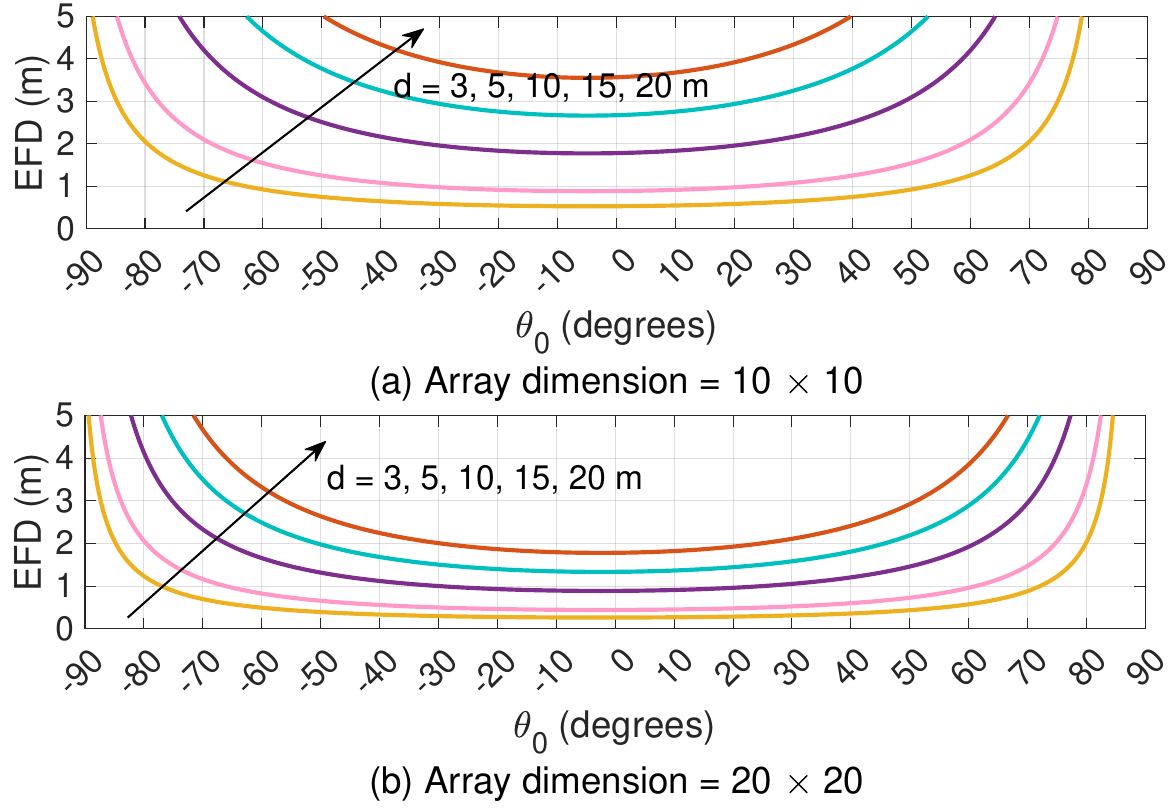}
    \caption{EFD versus different amounts of incident angle.}
    \label{fig:EFD_IncidentAngle}
    \vspace{-2em}
\end{figure}

\vspace{-1em}
\begin{corollary}\label{Cor: EFD Approximation}
In the case where $\theta_0 = 0$, indicating that the received beam at the RIS is perpendicular to the RIS plane, the EFD can be approximated as
\begin{equation}\label{eq: EFD approximation}
    \textrm{EFD} \approx \textrm{HPBW} \times d.
\end{equation}
\end{corollary}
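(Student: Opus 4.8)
The plan is to begin from the exact EFD expression of Lemma~2 and specialize it to perpendicular incidence. Setting $\theta_0 = 0$ in~(\ref{eq: EFD exact}) merges the two angular arguments in the denominator, so that
\begin{equation*}
    \textrm{EFD} = 2d \times \frac{\sin{\left(\frac{\textrm{HPBW}}{2}\right)}}{\cos{\left(\frac{\textrm{HPBW}}{2}\right)}} = 2d \tan{\left(\frac{\textrm{HPBW}}{2}\right)}.
\end{equation*}
This reduces the claim to approximating a single tangent term.

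Next, I would invoke the narrow-beam regime that underlies the whole plug-in RIS design. In the massive MIMO setting, the per-axis antenna count $N_i$ is large, so the HPBW formula $\textrm{HPBW} \approx 0.891\,\lambda/(N_i \delta_i)$ forces $\frac{\textrm{HPBW}}{2} \ll 1$. Applying the first-order approximation $\tan{x} \approx x$ valid for small $x$ then gives $\tan{(\frac{\textrm{HPBW}}{2})} \approx \frac{\textrm{HPBW}}{2}$, and substituting back yields
\begin{equation*}
    \textrm{EFD} \approx 2d \times \frac{\textrm{HPBW}}{2} = \textrm{HPBW} \times d,
\end{equation*}
which is precisely~(\ref{eq: EFD approximation}).

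The single delicate point, which I would flag as the main obstacle, is justifying the small-angle step rather than merely asserting it. Since the truncation error of $\tan{x} \approx x$ is of order $x^3/3$, the resulting relative error in the EFD scales as $(\textrm{HPBW}/2)^2$, so the approximation is accurate exactly when the beam is narrow and tightens as $N_i$ increases. I would therefore state the corollary explicitly as an approximation valid in the large-array regime, which is consistent with the operating assumptions of the plug-in RIS, where high beamforming gain and tight footprints are essential for illuminating a single sub-RIS without leaking power into neighboring spatial segments.
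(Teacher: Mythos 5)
Your proof is correct and follows essentially the same route as the paper's: substituting $\theta_0 = 0$ into the exact EFD expression to obtain $2d\tan(\frac{\textrm{HPBW}}{2})$, then applying the small-angle (MacLaurin) approximation $\tan x \approx x$, justified by the narrow beamwidth of a massive MIMO array. Your additional remark quantifying the truncation error as order $(\textrm{HPBW}/2)^2$ is a nice refinement but does not change the argument.
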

\begin{proof}
See Appendix \ref{Apndx: EFD Aprox Proof}.
\end{proof}
Fig. \ref{fig:EFD_IncidentAngle} illustrates the precise EFD for various values of $d$ and $\theta_0$. Notably, in Figs. \ref{fig:EFD_IncidentAngle}(a) and (b), a distinct pattern emerges: within a limited range centered around $\theta_0 = 0^\circ$, the EFD remains constant; the extent of this interval diminishes with higher values of $d$. Conversely, as it is illustrated in Fig. \ref{fig:EFD_IncidentAngle}(b), adopting a larger antenna array leads to expanding the interval.

Substituting EFD into (\ref{eq: sub-RIS spacing}), sub-RISs inter-spacing, i.e., $\Delta$, can be calculated as
\begin{equation}
    \Delta = \frac{\textrm{HPBW} \times d - \iota}{2}.
\end{equation}
Notably, $\Delta$ defines the minimum distance that needs to be maintained between two neighboring sub-RIS units to mitigate power leakage to the other spatial segments.

\vspace{-1em}
\section{Theoretical Analysis}
\label{Sec: Theoretical analysis}
In this section, we begin by presenting a mathematical expression to validate the accuracy of our ABER simulation results. Following that, we provide expressions for computing the detector's complexity for both plug-in and semi-passive RIS designs.

\vspace{-1em}
\subsection{ABER Theoretical Upper Bound}

Obtaining ABER involves calculating PEP, which denotes the likelihood of detecting $\hat{s}$ when transmitting $s^*$. Initially, we consider the channel as a known entity for both cascaded channels. This allows us to express the conditional PEP (CPEP) as follows:
\begin{equation}\label{eq: 1st equation CPEP}
\begin{split}
    \mathcal{P}( s^*  \rightarrow   \hat{s}  | \alpha_k, \beta_k ) 
    = & \mathcal{P} (|y - \sqrt{P} G_t G_r \mathbf{f}_r^H \mathbf{H}_{\text{eff},k} \mathbf{f}_t s^*|^2 \\
    & > |y - \sqrt{P}G_t G_r \mathbf{f}_r^H \mathbf{H}_{\text{eff},k} \mathbf{f}_t \hat{s}|^2).
\end{split}
\end{equation}

\begin{lemma} \label{Lem: CPEP}
    The mathematical expression for CPEP is as
    \begin{equation}\label{eq: CPEP}
    \begin{split}
        \mathcal{P} ( s^*  \rightarrow  & \hat{s} |\alpha_k,\beta_k) \\
        & = Q\Big(\frac{\sqrt{P}G_t G_r |\mathbf{f}_r^H \mathbf{H}_{\text{eff},k} \mathbf{f}_t(s^* - \hat{s})|^2}{\sqrt{2}\sigma || \mathbf{f}_r \mathbf{f}_r^H \mathbf{H}_{\text{eff},k} \mathbf{f}_t (s^* - \hat{s})||}\Big).
    \end{split}
    \end{equation}
\end{lemma}
\begin{proof}
    See Appendix \ref{Apndx: CPEP proof}.
\end{proof}
Lemma \ref{Lem: CPEP} provides a closed-form expression for the CPEP. To compute the unconditional PEP (UPEP), we take the expectation of CPEP over a sufficient number of channel realizations, denoted as $\mathrm{UPEP} = \mathbb{E} [\mathrm{CPEP}]$. Furthermore, to establish an upper bound for the ABER, we utilize the union-bound approach as follows:
\begin{equation}
    \textrm{ABER} \leq \frac{1}{\eta \mathcal{M}} \sum_{s^*} \sum_{\hat{s}} E_b (  s^*  \rightarrow  \hat{s}  ) \textrm{UPEP},
\end{equation}
where $E_b (  s^*  \rightarrow  \hat{s}  )$ is the total number of erroneous bits and $\mathcal{M}$ is the order of $\mathcal{M}$-ary constellation.

\vspace{-1em}
\subsection{Detector Complexity}
\vspace{-0.5em}

This subsection discusses the computation of detector complexity for the proposed plug-in RIS and semi-passive RIS designs.
Both designs employ the same ML detector given in (\ref{eq:Detector}), but the main difference lies in obtaining $\mathbf{H}_{\text{eff}}$. In the semi-passive RIS design, $\mathbf{G}$ and $\mathbf{R}$ are computed separately, and the phase shift matrix $\mathbf{\Psi}$ is also calculated by the RIS in each transmission period. The detector then uses these matrices to calculate $\mathbf{H}_{\text{eff}}$, affecting the detector's complexity.
However, in the plug-in RIS design, $\mathbf{H}_{\text{eff}}$ is directly acquired without the need for separate calculations of $\mathbf{G}$, $\mathbf{R}$, and $\mathbf{\Psi}$ due to the pre-adjusted sub-RIS configuration. Considering this point, we compute each RIS systems's detector complexity as follows. In this subsection, for the sake of notation simplicity, we show the number of RIS/sub-RIS elements engaged in the communication with $M$; hence, the index of sub-RISs has been ignored.

\begin{table}[t!]
  \scriptsize
  \centering
  \caption{Simulation parameters.}
  \vspace{-1em}
  \arrayrulecolor{black}
  \begin{tabular}{c}
  \begin{tabular}[t]{ p{0.05cm}  p{0.5cm}  p{3.4cm}  p{2.5cm} }
  \toprule
  \multicolumn{2}{c}{\textbf{Parameter}} &  \textbf{Description}  & \textbf{Value}  \\
  \toprule
  \multicolumn{2}{c}{$f_c$}  & Carrier frequency  & $28$ GHz  \\ [1 pt]
  \midrule
  \multicolumn{2}{c}{$B$}    & Bandwidth & $100$ MHz   \\ [1 pt]
  \midrule
  \multicolumn{2}{c}{$N_t$}  & BS antenna array size & $10 \times 10$    \\ [1 pt]
  \midrule
  \multicolumn{2}{c}{$N_r$}  & UE antenna array size  & $1$    \\ [1 pt]
  \midrule
  \multicolumn{2}{c}{$\delta_x = \delta_y$} & Antenna element separation & $\lambda / 2$    \\ [1 pt]
  \midrule
  \multicolumn{2}{c}{Noise PSD}    & Noise power spectral density & $-174$ dBm    \\
  \midrule
  \multicolumn{2}{c}{$g_e$}  & Antenna element gain & $0$ dBi   \\ 
  \midrule
  \multicolumn{2}{c}{$\varphi_t$ ($\varphi_{r,ris}$)} & Azimuth Tx AoD (RIS AoA) & $\mathcal{U}[-\pi, \pi]$   \\ [1 pt]
  \midrule
  \multicolumn{2}{c}{$\varphi_{t,ris}$} & Azimuth RIS AoD  & $\mathcal{U}[-\pi, \pi]$   \\ [1 pt]
  \midrule
  \multicolumn{2}{c}{$\vartheta_t$ ($\vartheta_{r,ris}$)} & Elevation Tx AoD (RIS AoA) & $\mathcal{U}[-\pi/3, \pi/3]$   \\ [1 pt]
  \midrule
  \multicolumn{2}{c}{$\vartheta_{t,ris}$}  & Elevation RIS AoD & $\mathcal{U}[-\pi/16, \pi/16]$\\
  \midrule
  \multicolumn{2}{c}{$(\phi_{b,k},\theta_{b,k})$}  & Azimuth and elevation AoD for a two sub-RIS system& $\{ (\frac{\pi}{2}, \frac{\pi}{32})$,$ (-\frac{\pi}{2}, \frac{\pi}{32}) \}$\\
  \midrule
  \multicolumn{2}{c}{$(\phi_{b,k},\theta_{b,k})$}  & Azimuth and elevation AoD for a four sub-RIS system& $\{ (\frac{\pi}{4}, \frac{\pi}{32})$,$(\frac{3\pi}{4}, \frac{\pi}{32})$, $(-\frac{\pi}{4}, \frac{\pi}{32})$,$(-\frac{3\pi}{4}, \frac{\pi}{32})\}$\\
  \midrule
  \multicolumn{2}{c}{{$P_{\text{controller}}$}}  & {RIS controller power consumption} & {$1.5$ W \cite{9551980}}\\
  \midrule
  \multicolumn{2}{c}{$P_{\text{PA}}$}  & Power amplifier power consumption & $20$ mW \cite{7370753}\\
  \midrule
  \multicolumn{2}{c}{$N_{\text{rf}}$}  & Number of RF chain & $1$\\
  \midrule
  \multicolumn{2}{c}{$M_{\text{active}}$}  & Number of RIS active elements & $0.08 \times M$ \cite{9685434}\\
  \midrule
  \multicolumn{2}{c}{$P_{\text{PS}}$}  & Phase shifter power consumption & $30$ mW \cite{7370753}\\
  \midrule
  \multicolumn{2}{c}{$P_{\text{RF-chain}}$}  & RF chain power consumption & $40$ mW \cite{7876856, 7370753}\\
  \midrule
  \multicolumn{2}{c}{$P_{\text{BB}}$}  & Baseband processor power consumption & $200$ mW \cite{7876856, 7370753}\\
  \midrule
  \multicolumn{2}{c}{$P_{\text{LNA}}$}  & Low noise amplifier power consumption & $20$ mW \cite{7876856, 7370753}\\
  \midrule
  \multicolumn{2}{c}{$P_{\text{PA\_RIS}}$}  & RIS element power consumption & $10$ mW \cite{9370097}\\
  \midrule
  \multicolumn{2}{c}{$FOM_W$}  & Walden’s figure-of-merit & $46.1$ fJ/conversion-step \cite{9370097}\\
  \midrule
  \multirow{3}*{\rotatebox[origin=c]{90}{Indoors}} & $(a,b)$ & Path loss parameters & $(32.4,1.73)$ \cite{etsitr138901v17}  \\
       & $d_{BR}$ & Distance between BS and RIS & $2.5 \ m$ \\
       & $d_{RU}$ & Distance between RIS and UE & $10 \ m$ \\[1 pt]
  \midrule 
  \multirow{3}*{\rotatebox[origin=c]{90}{Outdoors}} & $(a,b)$ & Path loss parameters & $(32.4,2.1)$ \cite{etsitr138901v17}  \\
       & $d_{BR}$ & Distance between BS and RIS & $20 \ m$ \\
       & $d_{RU}$ & Distance between RIS and UE & $10 \ m$ \\
  [3pt]
  \bottomrule
\end{tabular}
\end{tabular}
  \label{tab:System parameters}
  \vspace{-3em}
\end{table}

\begin{enumerate}
    \item \textit{\textbf{Semi-passive RIS:}} To compute $\mathbf{H}_{\text{eff}}$, the detector calculates $\mathbf{\Psi}\mathbf{G}$, involving $\sim M^2 N_t$ operations. Subsequently, $\mathbf{R}\mathbf{\Psi}\mathbf{G}$ demands $\sim N_r M N_t$ operations. Thus, for computing $\mathbf{H}_{\text{eff}}$, $\sim M^2 N_t + N_r M N_t$ operations are needed. Moreover, $\mathbf{H}_{\text{eff}} \mathbf{f}_t$ entails $\sim N_r N_t$ operations, and obtaining $\mathbf{f}_r^H \mathbf{H}_{\text{eff}} \mathbf{f}_t$ takes $\sim N_r$ operations. With this procedure repeated $\mathcal{M}$ times, the detector's complexity is $\mathcal{O}(\mathcal{M}(M^2 N_t + N_r M N_t + N_t N_r + N_r))$. However, by neglecting the lower-order terms $N_t N_r + N_r$, the complexity of the semi-passive RIS detector can be simplified to $\sim \mathcal{O}(\mathcal{M} M N_t(M + N_r))$.

    \item \textit{\textbf{Plug-in RIS:}} In the plug-in RIS design, the detector has direct access to $\mathbf{H}_{\text{eff}}$ as explained earlier. Thus, in the initial step, it computes $\mathbf{H}_{\text{eff}} \mathbf{f}_t$, requiring $\sim N_r N_t$ operations. Following this, to calculate $\mathbf{f}_r^H \mathbf{H}_{\text{eff}} \mathbf{f}_t$, the detector performs $\sim N_r$ operations. This process is repeated $\mathcal{M}$ times, resulting in a detector complexity of order $\mathcal{O}(\mathcal{M}(N_r N_t + N_r))$. By omitting the lower-order term $N_r$, the plug-in RIS complexity can be simplified to $\sim \mathcal{O}(\mathcal{M} N_r N_t)$. 
\end{enumerate}

The complexity analysis conducted for both the semi-passive and plug-in RIS detectors demonstrates that the semi-passive RIS detector is more computationally complex, which is numerically shown in Section \ref{subsec: SIM RES-Detector complexity}.

\vspace{-1em}
\section{Illustrative Results - Single User Case}
\label{Sec: Sim results}
\vspace{-0.5em}

This section evaluates the proposed plug-in RIS {for single-user scenarios} using {five} performance metrics: {received signal-to-noise-ratio (SNR) for coverage performance}, ABER, achievable rate, energy efficiency (EE), and detector complexity.
We also validate the ABER performance of the plug-in RIS via the theoretical upper bound. The assessment is conducted in practical scenarios, both indoors (office) and outdoors (street canyon), implementing RIS at the BS and UE sides, respectively.
The effectiveness of the plug-in RIS is compared with {four} benchmarks, which are:
\begin{itemize}
    \item \textbf{Benchmark 1}: \textit{Semi-passive RIS} \cite{9370097, 9685434}. This structure has been proposed in the literature as a practical approach that eliminates the need for cascaded channel decoupling \cite{9685434}. Besides, semi-passive RIS can perform online (real-time) configuration based on the current channel characteristics.

    \item \textbf{Benchmark 2}: \textit{Blind RIS} \cite{8801961}\footnote{In \cite{8801961}, the blind RIS suggests setting all RIS elements' phase shifts to zero. However, to enhance the performance of the blind RIS in the mmWave communication systems, we set random phase shift adjustments to increase diversity.}. In the blind RIS configuration, the phase adjustments occur fully randomly; accordingly, there is no necessity for cascaded channel decoupling or control link establishment, like the proposed plug-in RIS. 

    \item \textbf{Benchmark 3}: \textit{Amplitude and forward (AF) relay} \cite{10183245}. AF relay is a traditional method to extend the coverage to the dead zones.

    \item \textbf{Benchmark 4}: \textit{Codebook-based RIS} \cite{9952197, 10097454}. This scheme aims to reduce the pilot overhead and streamline the implementation process compared to conventional RIS schemes.
\end{itemize}
%It is worth noting that semi-passive RIS has been proposed in the literature as a practical approach that eliminates the need for cascaded channels decoupling \cite{9685434}.
%On the other hand, in the blind RIS configuration, the phase adjustments occur fully randomly; accordingly, there is no necessity for cascaded channel decoupling or control link establishment. Hence, the semi-passive RIS and blind RIS are two suitable candidates for benchmarking within the scope of this paper. 

%\begin{figure}
%    \centering
%    \includegraphics[scale = 0.2]{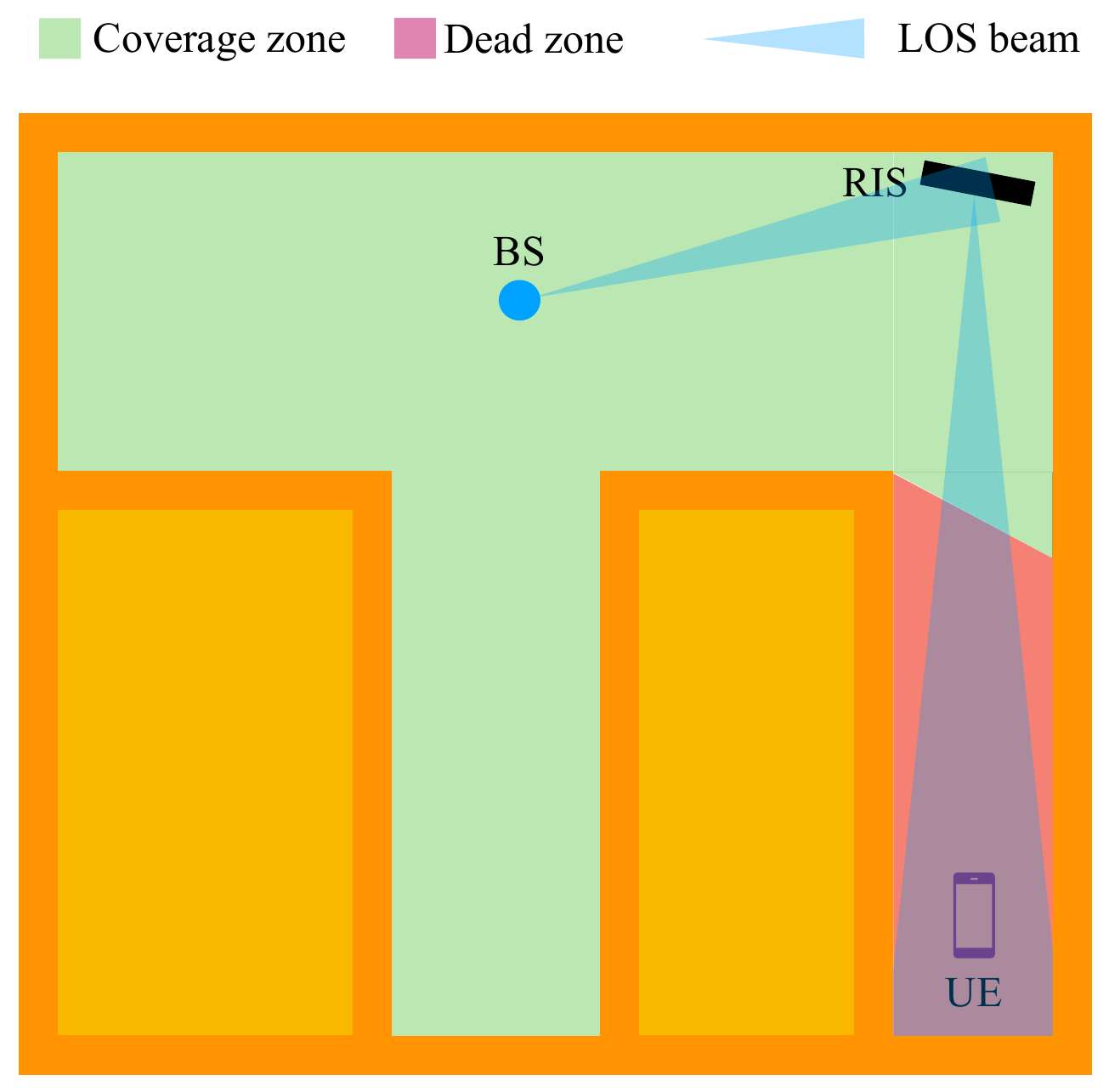}
%    \caption{Indoor office scenario for BS-side RIS implementation \cite{etsigrris001}.}
%    \label{fig:Indoor Office Scenario}
%\end{figure}

\vspace{-2em}
\subsection{Simulation Setup}
\vspace{-0.5em}
In this subsection, we describe two wireless communication scenarios to evaluate the performance of the proposed plug-in RIS in enhancing the coverage. These scenarios have also presented in \cite{etsigrris001} as practical options for RIS implementation. 

\begin{figure}
     \centering
     \begin{subfigure}[b]{0.47\columnwidth}
         \centering
         \includegraphics[width=\columnwidth]{Indoor_Office_Scenario.pdf}
         \caption{}
         \label{fig:Indoor Office Scenario}
     \end{subfigure}
     \hfill
     \begin{subfigure}[b]{0.47\columnwidth}
         \centering
         \includegraphics[width=\columnwidth]{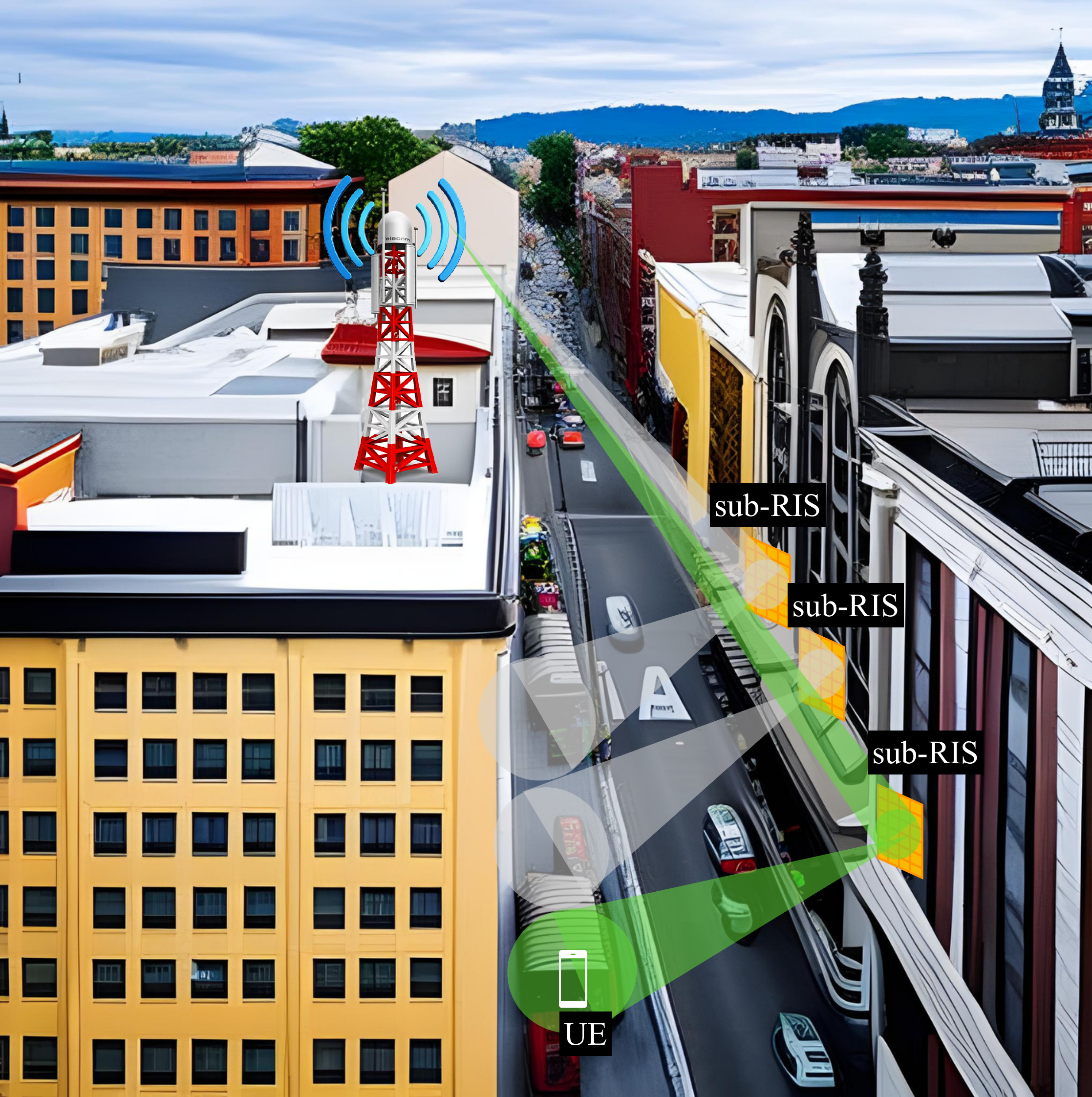}
         \caption{}
         \label{fig:Street Canyon Scenario}
     \end{subfigure}
     \hfill
     \caption{Implemented scenarios; (a) Indoor office for BS-side RIS, (b) Street canyon scenario for UE-side RIS.}
     \label{fig:scenarios}
     \vspace{-2em}
\end{figure}

%\vspace{-0.5em}
\begin{enumerate}
    \item \textit{\textbf{Indoor Office (BS-side RIS):}} The first scenario involves an indoor environment with two parallel hallways, as illustrated in Fig. \ref{fig:scenarios}(a). In practical situations, the BS is optimally installed to cover maximum areas, but due to mmWave vulnerability to blockage, certain spots and zones become challenging to cover, for instance, one of the parallel hallways in this scenario. Instead of deploying additional costly and inefficient BS, the RIS can be implemented to extend coverage to these dead zones. Here, the distance between the BS (RIS) and the RIS (UE) is $2.5 \ m$ ($10 \ m$).

    \item \textit{\textbf{Street Canyon (UE-side RIS):}} For the second scenario, we consider a street canyon environment, shown in Fig. \ref{fig:scenarios}(b), where the BS is installed on the rooftop of one building on one side of the street. UEs positioned along the same side of the street experience signal deficiency owing to their placement within the dead zone. While an alternative could involve deploying another BS on top of a building on the opposite side of the street, adopting RIS proves to be a more cost/energy-effective way to cover the dead zone. In this scenario, we assume that the distance between the BS (RIS) and the RIS (UE) is $20 \ m$ ($10 \ m$).
    \vspace{-0.5em}
\end{enumerate}

%\begin{figure}
%    \centering
%    \includegraphics[scale = 0.12]{Street_Canyon_Scenario.pdf}
%    \caption{Street canyon scenario for UE-side RIS implementation.}
%    \label{fig:Street Canyon Scenario}
%    \vspace{-1.5em}
%\end{figure}

The carrier frequency is designated as $28$ GHz, and the path loss model is employed as follows \cite{etsitr138901v17}:
\vspace{-0.5em}
\begin{equation}
    PL (d) = a + 10 b \log_{10}{(d)} + 20\log_{10}{(f_c)},
    \vspace{-0.5em}
\end{equation}
where the values of $a$ and $b$ are given in Table \ref{tab:System parameters}. The analysis assumes a noise power spectral density (PSD) of $-174$ dBm/Hz, with a bandwidth ($B$) of $100$ MHz \cite{10176315}. This results in a noise power calculation of $\sigma^2 = -94$ dBm \cite{10176315}. The antenna gain of each element is considered to be $g_e = 0$ dBi, signifying the utilization of omnidirectional antennas for every element. The assumed values for system parameters are summarized in Table \ref{tab:System parameters}.

\begin{figure}
    \centering
    \includegraphics[scale = 0.25]{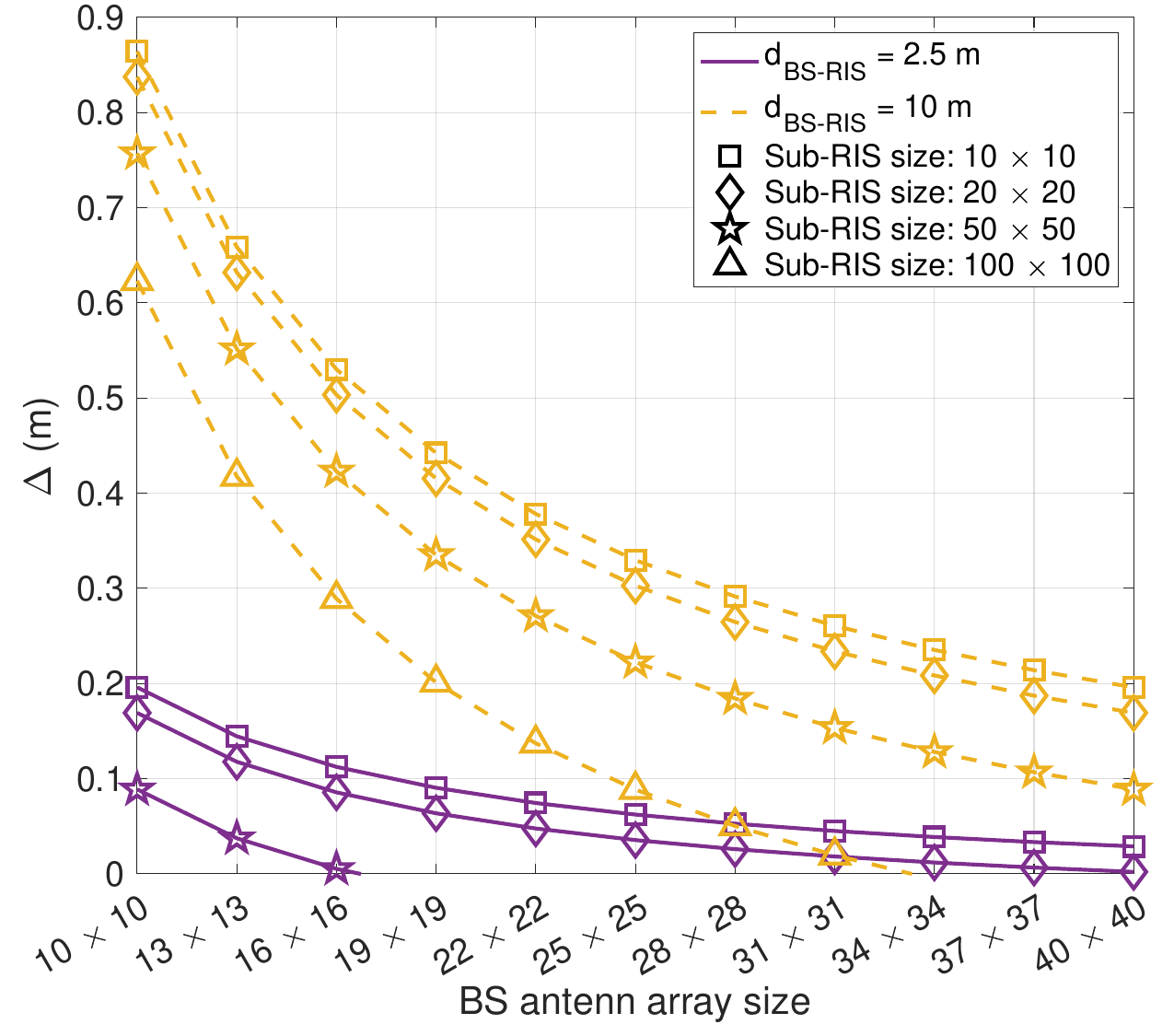}
    \caption{The variations of sub-RIS inter-spacing ($\Delta$) with respect to different BS antenna array sizes for different distances and sub-RIS sizes.}
    \label{fig: sub-RIS spacing}
    \vspace{-2em}
\end{figure}

In the simulation, we consider a limited and {known dead zone} \cite{etsigrris001}, as depicted in Figs. \ref{fig:scenarios}(a) and (b), which leads to an assumption that the elevation AoD of the sub-RISs is constrained within a limited range. Consequently, we consider that the azimuth and elevation AoDs of the sub-RISs are $\varphi_{t,ris} \in \mathcal{U}[-\pi, \pi]$ and $\vartheta_{t,ris} \in \mathcal{U}[-\pi/16, \pi/16]$, respectively. 
With such assumptions and by adopting a sub-RIS of size $10 \times 10$, the sub-RIS can cover a circular area with a diameter of $4 \ m$ which is located $10$ m away from the sub-RIS. 
Furthermore, given the fixed position of BS and sub-RISs, we can align the sub-RISs optimally with the BS \cite{10176315} to meet the desired minimum sub-RISs separation requirements. 
Based on our discussion in Section \ref{Sec: Plug-in RIS Structure} and as given in Fig. \ref{fig:EFD_IncidentAngle}(a), we assume that $\varphi_t, \varphi_{r,ris} \in \mathcal{U}[-\pi, \pi]$ and $\vartheta_t, \vartheta_{r,ris} \in \mathcal{U}[-\pi/3, \pi/3]$, considering the BS is equipped with a $10 \times 10$ antenna array. In other words, we assume that if the incident angle $\theta_0$ is confined to an interval of $[-\pi/3,\pi/3]$, the EFD remains constant\footnote{Note that different antenna arrays' and RISs' sizes, and the use of beam widening techniques, as discussed in \cite{8186193}, can lead to the definition of different AoD and AoA intervals at the terminals. Nevertheless, these aspects are beyond the scope of this paper and are left for future research directions.}. We can consider the maximum amount of EFD in this interval to calculate $\Delta$.

Fig. \ref{fig: sub-RIS spacing} depicts sub-RIS inter-spacing versus different system parameters. As shown in Fig. \ref{fig: sub-RIS spacing}, with increasing BS antenna array size, the minimum spacing among sub-RISs $\Delta$ decreases due to narrower beams emitted through BS. 
On the other hand, with increasing the distance between BS and RIS, the EFD increases, which means that the sub-RISs inter-space should be increased to avoid power leakage to the non-targeted sub-RISs.
Nevertheless, since the RIS is a passive device, we can increase the number of reflecting elements in each sub-RIS to decrease $\Delta$. This is specifically feasible in the proposed plug-in RIS since increasing the number of elements does not entail more complexity in the system. 
It is worth emphasizing that since the sub-RISs in the plug-in RIS exploit fixed beams, only traditional end-to-end channel estimation is enough, and cascaded channel decoupling is not required in the practical plug-in RIS system, resulting in significant complexity reduction in comparison with semi-passive and fully passive RIS structures, which allows us to easily increase the number of reflectors in the plug-in RIS structure. 
Fig. \ref{fig: sub-RIS spacing} shows that with increasing the sub-RIS size, the sub-RISs inter-space decreases. Note that in this paper, we consider BS antenna array and sub-RISs of size $10 \times 10$ to prevent high computational burden in the computer simulations, since in the computer simulations, we need to construct the cascaded channels separately for the sake of analysis.

\begin{figure}
    \centering
    \includegraphics[scale = 0.21]{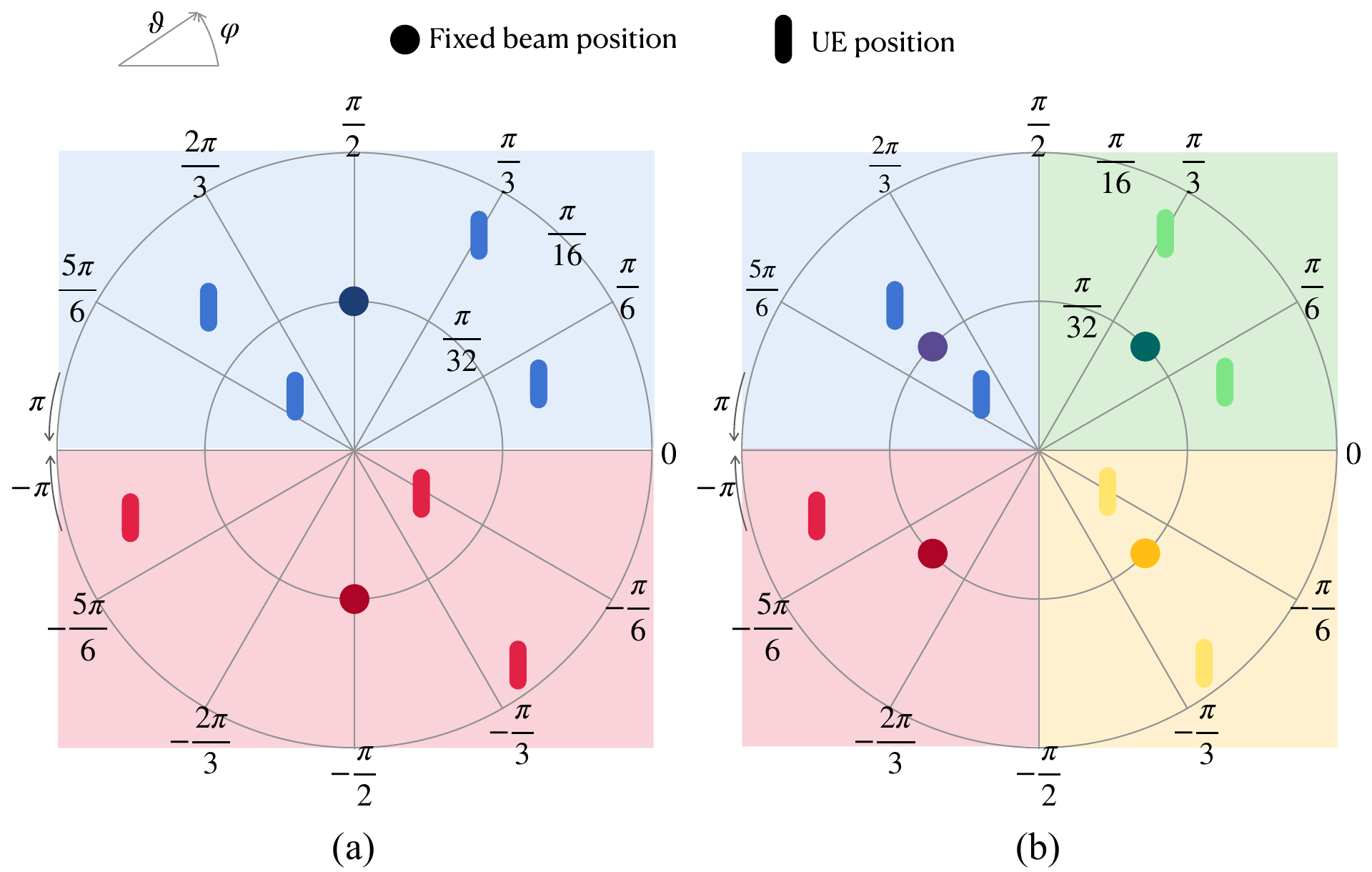}
    \caption{Dead zone divisions, fixed beams, and possible UE positions for (a) two segments and (b) four segments.}
    \label{fig:Zone Separation}
    \vspace{-1em}
\end{figure}

\begin{figure*}
    \centering
    \includegraphics[scale = 0.3]{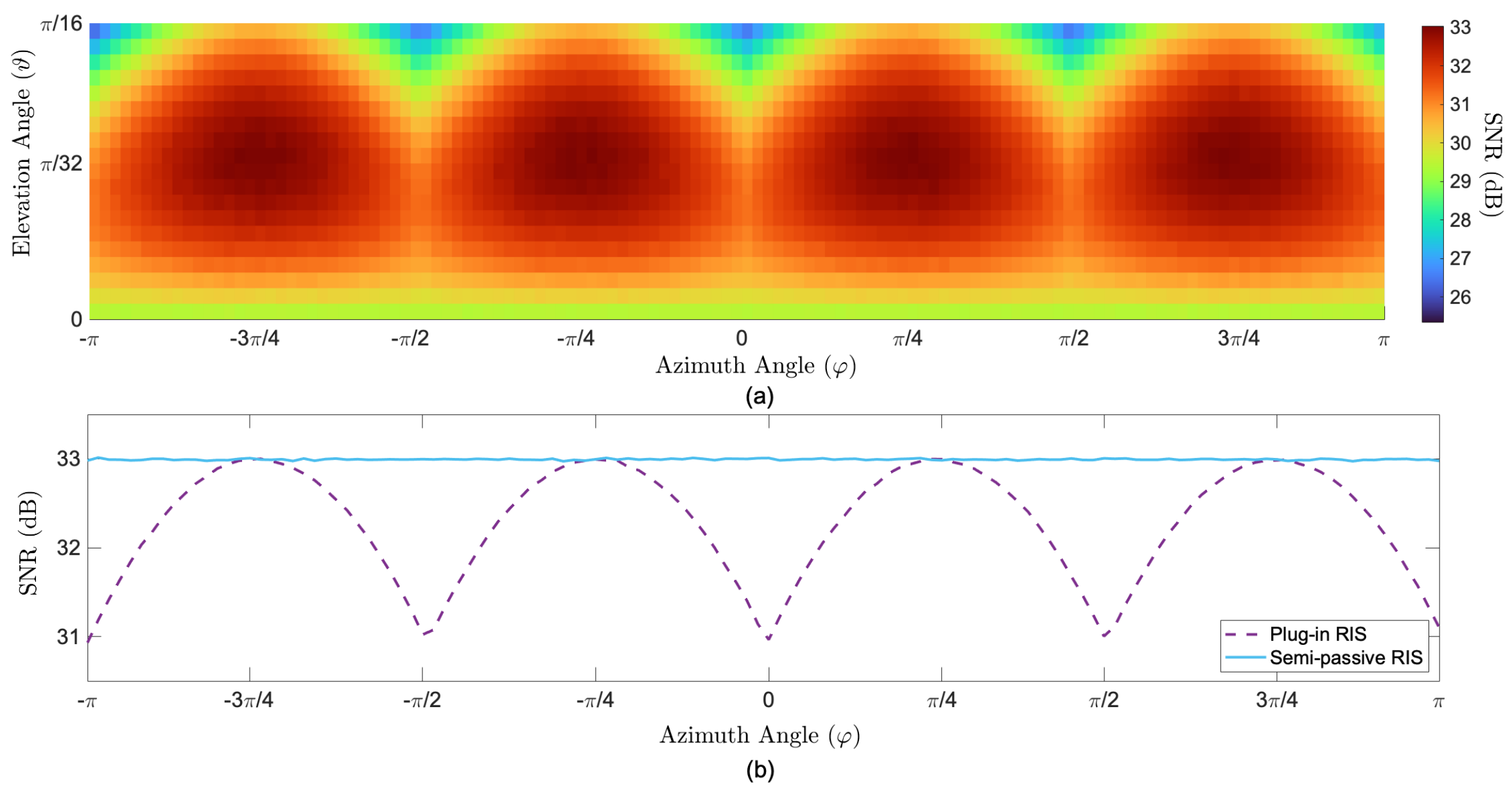}
    \caption{Coverage performance of the plug-in RIS; (a) considering all angular locations of the UE, (b) compared to the semi-passive RIS.}
    \label{fig:Coverage_Performance_R2}
    \vspace{-1.5em}
\end{figure*}

%For the plug-in RIS scenario, as it is explained earlier, the dead zone region is partitioned into several spatial segments, and each segment is designated to be covered by an individual sub-RIS. Each sub-RIS establishes a fixed beam towards the center of its corresponding segment.
As depicted in Fig. \ref{fig:Zone Separation}, various segments within the dead zone are represented in polar coordinates for both two and four divisions. The fixed beam orientation for each segment and several potential UE positions within the dead zone are also illustrated in Fig. \ref{fig:Zone Separation}. It is worth mentioning that the information given in Fig. \ref{fig:Zone Separation} is used for plug-in RIS configuration and dead zone segmentation in this paper. Specifically, for two and four spatial segments, $(\phi_{b,i},\theta_{b,i}) \in \{ (\pi/2, \pi/32), (-\pi/2, \pi/32) \}$ and $(\phi_{b,i},\theta_{b,i}) \in \{ (\pi/4, \pi/32), (3\pi/4, \pi/32), (-\pi/4, \pi/32), (-3\pi/4, \pi/32) \}$, respectively.
As the number of segments increases, the average distance between the UE positions and the corresponding fixed beam orientations decreases; consequently, the UEs can receive signals via more favorable beams, which enhances overall system performance. It is important to mention that although Fig. \ref{fig:Zone Separation} shows dead zone division only along the azimuth angle, such division can also be applied along the elevation angle. However, for clarity and simplicity, we have opted not to illustrate more complex divisions in the figures.

It is noteworthy that in this paper, the system performance analysis is carried out using two and four sub-RIS configurations. For single-user scenarios, only one sub-RIS is used during each transmission period while the rest remain idle. On the other hand, the semi-passive RIS utilizes all implemented elements for maximum performance. In other words, the plug-in RIS uses only a portion of deployed elements, potentially sacrificing passive beamforming gain compared to the semi-passive RIS. However, the semi-passive RIS employs baseband components to estimate the cascaded channels and adjust its phase shifts autonomously, while the proposed plug-in RIS remains fully passive. Therefore, the plug-in RIS offers a favorable trade-off between system performance and EE/cost/complexity.
Likewise, for the blind RIS, all implemented elements are engaged in each transmission cycle.

\vspace{-1em}
\subsection{{Coverage Performance}}
\vspace{-0.5em}

{
    This subsection evaluates the plug-in RIS's coverage performance and compares it with the coverage performance of semi-passive RIS in the street canyon scenario. To do this, we calculate the received SNR for all UE's angular locations throughout the dead zone via Monte-Carlo simulation, as illustrated in Fig. \ref{fig:Coverage_Performance_R2}(a). The received SNR can be calculated as follows:
    \vspace{-0.5em}
    \begin{equation}
        \textrm{SNR} = \frac{|\sqrt{P}G_t G_r \mathbf{f}_r^H \mathbf{H}_{\text{eff},k} \mathbf{f}_t|^2}{\sigma^2}.
        \vspace{-0.5em}
    \end{equation}
    It is worth mentioning that, for this simulation, we considered four sub-RISs, each of size $10 \times 10$, along with $P = 10$ dBm. According to Fig. \ref{fig:Coverage_Performance_R2}(a), when UE is located at the center of each segment, the received SNR has its maximum amount of $33$ dB. The received SNR decreases as UE moves to the edges of each segment. By counting all the pixels given in Fig. \ref{fig:Coverage_Performance_R2}(a), it has been revealed that for around $85 \%$ of the UE's location spots, the received SNR is higher than $30$ dB (i.e., within $3$ dB of the SNR peak). Furthermore, as illustrated in Fig. \ref{fig:Coverage_Performance_R2}(a), the worst location for UE is the corner of each segment, as expected.}

{
    Fig. \ref{fig:Coverage_Performance_R2}(b) demonstrates the coverage performance of plug-in RIS compared to the benchmark, semi-passive RIS. For this simulation, we assumed $\vartheta = \frac{\pi}{32}$ for all the azimuth angles, and the semi-passive RIS is supposed to have a size of $10 \times 10$; hence, it has the same beamforming gain with the plug-in RIS in each transmission period.
    %Note that, to have a fair comparison, for semi-passive RIS, we consider the same size as each sub-RIS in plug-in RIS; hence, for both schemes, the same number of reflectors are engaged in the communication in each realization of the channel.
    Due to the online configuration, the semi-passive RIS can provide a constant SNR of $33$ dB at the UE for all angular locations. The plug-in RIS performance is the same as semi-passive RIS only at the center of each spatial segment. By moving UE to the segment edges, the coverage performance degrades approximately $2$ dB and achieves $31$ dB at the cell edges.}

\vspace{-1.5em}
\subsection{ABER Performance}
\vspace{-0.5em}

This subsection displays simulation results for the ABER performance of the proposed plug-in RIS. We further verify these results via the analytical upper bound derived in Lemma \ref{Lem: CPEP}. Here, we utilize a $10 \times 10$ RIS for the semi-passive RIS structure to match the passive beamforming gain with the proposed plug-in RIS in each transmission period (since in each transmission period, only one sub-RIS of size $10 \times 10$ is used). In contrast, for the blind RIS design, we consider RIS sizes of $20 \times 10$ and $20 \times 20$ to highlight the superiority of the plug-in RIS over blind RIS. 

\begin{figure}
    \centering
    \includegraphics[scale = 0.27]{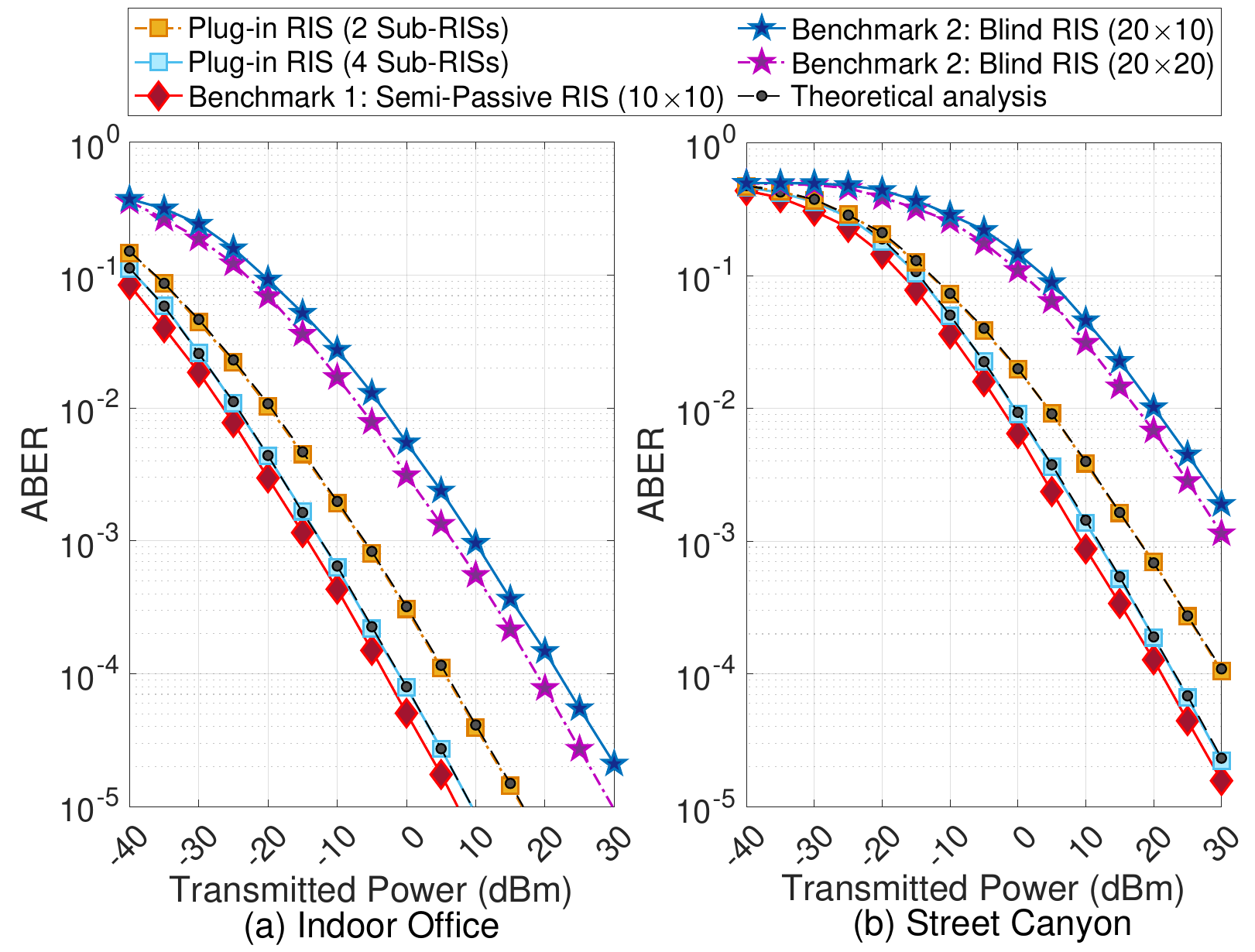}
    \caption{ABER performance analysis for (a) the BS-side RIS-aided system in the indoor office environment, (b) the UE-side RIS-aided system in the street canyon environment.}
    \label{fig:BER Sim}
    \vspace{-1em}
\end{figure}

As depicted in Fig. \ref{fig:BER Sim}, our simulations closely align with the upper bound, confirming the accuracy of the conducted simulations for both indoor office and street canyon scenarios. Notably, we utilized a binary phase shift keying (BPSK) signaling scheme for this simulation.
Fig. \ref{fig:BER Sim} illustrates a substantial performance improvement exhibited by the proposed plug-in RIS when compared with the blind RIS design, even though the blind RIS configuration considered here offers a larger beamforming gain due to its bigger size.
Additionally, results in Fig. \ref{fig:BER Sim} reveal that adopting a plug-in RIS with two sub-RISs results in roughly $9$ dB higher ABER compared to the semi-passive RIS. On the other hand, employing four sub-RISs leads to performance enhancement, causing only a $2$ dB ABER degradation compared to the semi-passive RIS in both scenarios.
It is important to recall that each sub-RIS corresponds to a segment within the dead zone, as shown in Fig. \ref{fig:Zone Separation}. Therefore, increasing the number of sub-RISs is equivalent to increasing the number of segments. Consequently, when we increase the sub-RISs from two to four, the ABER performance improves by about $7$ dB because the UE has more chance to receive a stronger signal.
These outcomes underline the efficacy of the proposed plug-in RIS structure, which proves to be more cost-effective than the semi-passive alternative with a few degradations in ABER.

\begin{figure}
    \centering
    \includegraphics[scale = 0.3]{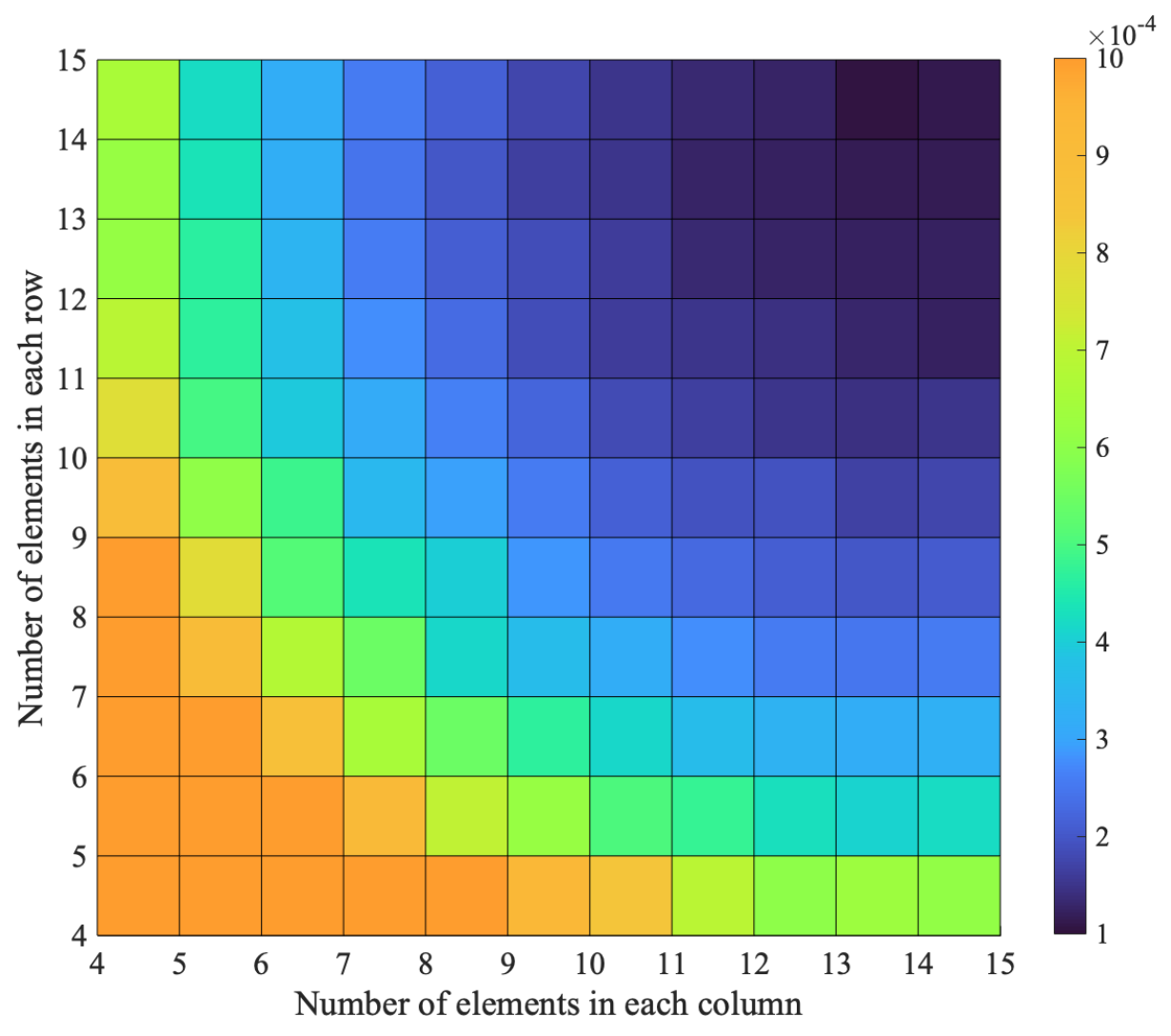}
    \caption{{ABER performance with increasing number of passive elements in each sub-RIS.}}
    \label{fig:Plug-in RIS vs Size}
    \vspace{-2em}
\end{figure}

{Fig. \ref{fig:Plug-in RIS vs Size} illustrates the ABER performance with the increasing number of passive elements in each sub-RIS in the plug-in RIS. For this simulation, we assumed that four sub-RISs have been deployed in the street canyon scenario, and the transmit power is considered to be $P = 20$ dBm. With the increasing number of passive elements, passive beamforming gain enhances, resulting in decreasing ABER.}

\vspace{-1.5em}
\subsection{Achievable Rate Performance}
\vspace{-0.5em}
In this subsection, we examine the achievable rate performance of the proposed plug-in RIS and compare it with benchmarks. For this subsection, alongside the RIS sizes explored in the previous subsection, we consider an RIS of size of $10 \times 10$ for the blind RIS scheme. 
The achievable rate can be calculated as 
\vspace{-0.5em}
\begin{equation}
    R = \mathbb{E} [ \log_2(1 + \textrm{SNR})].
    \vspace{-0.5em}
\end{equation}

Fig. \ref{fig:Achievable Rate Sim} depicts the plug-in RIS performance in terms of achievable rate compared to the two considered benchmark schemes. Similar to the ABER performance, the plug-in RIS exhibits satisfactory achievable rate performance compared to the {three} benchmark schemes under consideration. Compared with the blind RIS configuration, our novel plug-in RIS configuration significantly improves the achievable rate performance. Likewise, similar to the ABER performance outcomes, even larger-scale blind RIS configurations, which provide more significant beamforming gains, fail to reach the achievable rate performance offered by the plug-in RIS configuration.
In comparison to the semi-passive RIS, the proposed plug-in RIS shows a performance degradation of $5$ dB with $2$ sub-RISs. This degradation decreases to $2$ dB when using $4$ sub-RISs, demonstrating the impact of higher SNR at the UE due to increased segments in the dead zone. {Ultimately, AF relay performance is also given in Fig. \ref{fig:Achievable Rate Sim} as another benchmark. In this regard, we consider an AF relay equipped with a single antenna replaced with RIS in both indoor and outdoor scenarios. As illustrated in Fig. \ref{fig:Achievable Rate Sim}, for a fixed relay power $P_r$, with increasing BS transmitted power $P$, the achievable rate increases; however, after a certain level of $P$, the achievable rate remains constant. To explain this behavior, we refer to the relay's SNR, which can be calculated as follows \cite{8520809}:
\begin{equation}\label{eq: realy snr}
    \textrm{SNR}_{\textrm{relay}} = \frac{P_{r} G_t^2 P ||\mathbf{G} ||^2  |\mathbf{R}|^2}{ P_{r}|\mathbf{R}|^2 \sigma^2 + G_t^2 P ||\mathbf{G}||^2\sigma^2  + \sigma^4}.
\end{equation}
By increasing transmitted power $P$, term $G_t^2 P || \mathbf{G} ||^2 \sigma^2$ prevails over $P_{r} |\mathbf{R}|^2 \sigma^2 + \sigma^4$. Therefore, in high transmitted power $P$, we can write $\textrm{SNR}_{\textrm{relay}} \approx \frac{P_{r} G_t^2 {P} ||\mathbf{G} ||^2  |\mathbf{R}|^2}{G_t^2 {P} || \mathbf{G} ||^2 \sigma^2} = \frac{P_{r} G_t^2 ||\mathbf{G} ||^2  |\mathbf{R}|^2}{G_t^2 || \mathbf{G} ||^2 \sigma^2}$; hence, BS transmitted power has no effect on the AF relay's SNR when surpasses a specified level. On the other hand, with increasing relay's power $P_r$, the achievable rate increases, as shown in Fig. \ref{fig:Achievable Rate Sim}. Nonetheless, due to the constant achievable rate performance of the AF relay at the high BS transmitted power, it cannot surpass plug-in RIS in such power levels.}

Computer simulation results in this subsection emphasize the effectiveness of our plug-in RIS compared to the benchmarks. While the plug-in RIS may not outperform the semi-passive RIS, its cost-effective passive design merits consideration.

\vspace{-1.5em}
\subsection{Energy Efficiency}
\vspace{-0.5em}

\begin{figure}
    \centering
    \includegraphics[scale = 0.25]{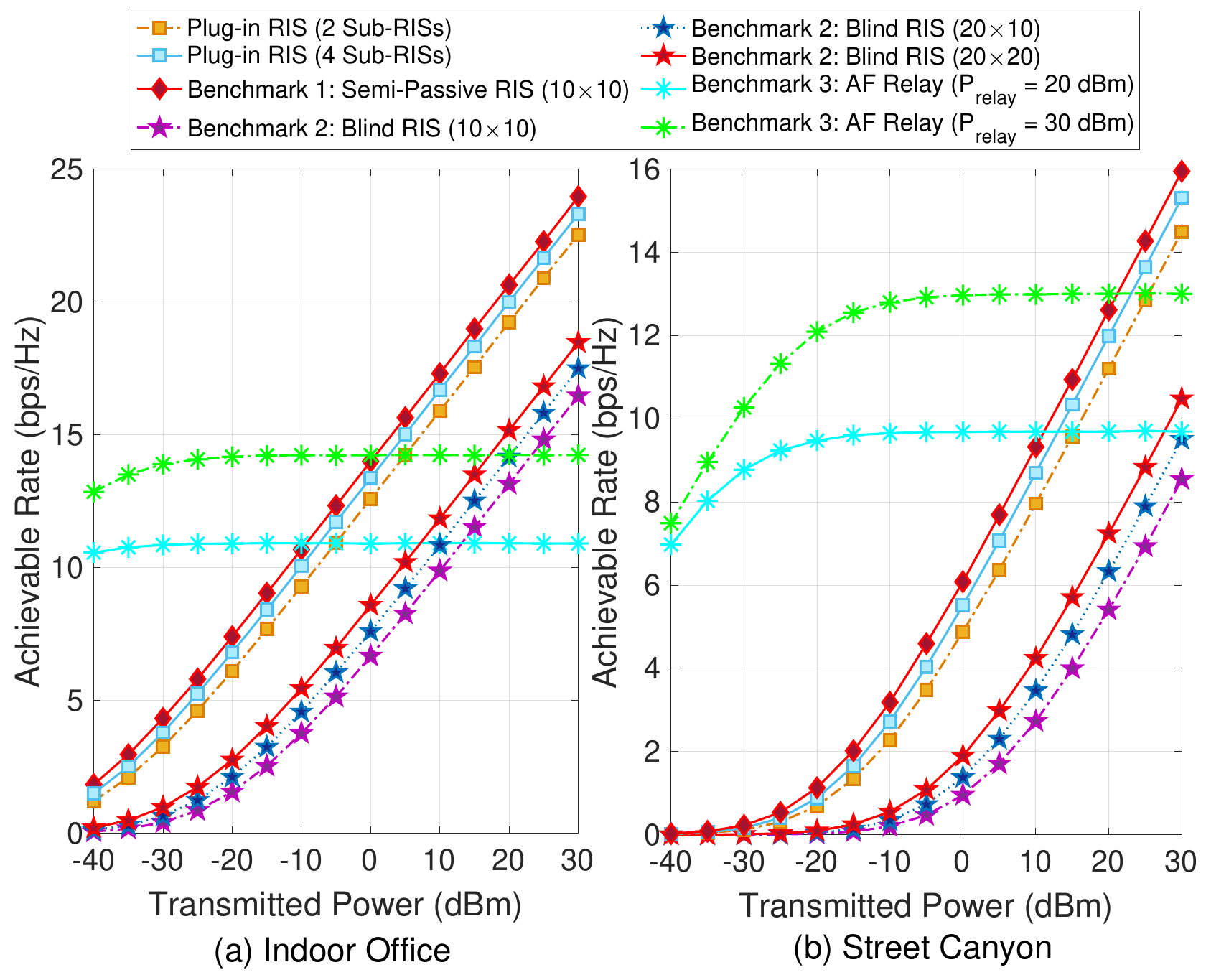}
    \caption{Achievable rate performance for (a) the BS-side RIS-aided system in the indoor office environment, (b) the UE-side RIS-aided system in the street canyon environment.}
    \label{fig:Achievable Rate Sim}
    \vspace{-1.5em}
\end{figure}

In this subsection, we focus on the strength point of the plug-in RIS design, which is its EE. We also highlight the trade-offs between EE and ABER/achievable rate. The EE is computed as follows:
\vspace{-0.5em}
\begin{equation}
    \eta_{EE} = \frac{R \times B}{P_c} \ \text{bits/Joule},
    \vspace{-0.5em}
\end{equation}
where $P_c$ signifies the power consumption within the system and can be determined as follows \cite{9370097}:
\vspace{-0.5em}
\begin{equation}
    P_c = P_{\text{Tx}} + P_{\text{Rx}} + P_{\text{RIS}},
    \vspace{-0.5em}
\end{equation}
where $P_i$ ($i \in \{ \text{Tx, Rx, RIS} \}$) denotes the power consumption of the respective terminal and can be calculated as follows \cite{7370753, 7876856}:
\vspace{-0.5em}
\begin{equation}
\begin{split}
    P_{\text{Tx}} = P + N_t & P_{\text{PA}} + N_{\text{rf}} ( N_t P_{\text{PS}} 
    + P_{\text{RF-chain}} + 2P_{\text{DAC}} ) + P_{\text{BB}},
\end{split}
\end{equation}
\begin{equation}
\begin{split}
    P_{\text{Rx}} = N_r P_{\text{LNA}} & + N_{\text{rf}} (N_r P_{\text{PS}} 
    + P_{\text{RF-chain}} + 2P_{\text{ADC}}) + P_{\text{BB}},
\end{split}
\end{equation}
\begin{equation}
    P_{\text{Plug-in RIS}} = P_{\text{Blind RIS}} = M P_{\text{PA\_RIS}},
\end{equation}
\begin{equation}
    {P_{\text{Codebook-based RIS}} = P_{\text{controller}} + M P_{\text{PA\_RIS}},}
\end{equation}
\begin{equation}
\begin{split}
        & P_{\text{Semi\_Passive RIS}} =  P_{\text{controller}} + M P_{\text{PA\_RIS}} \\
        & + M_{\text{active}}(P_{\text{LNA}} + P_{\text{RF-chain}} + 2 P_{\text{ADC(RIS)}}) + P_{\text{BB}},
\end{split}
\vspace{-0.5em}
\end{equation}
where $M$ represents for number of RIS/sub-RIS elements engaged in the communication and $P_{\text{ADC}}$ ($P_{\text{DAC}}$) is the consumption power of analog-to-digital (ADC) (digital-to-analog (DAC)) converter at the receive (transmit) terminal and can be computed as follows \cite{9370097, 7876856}:
\begin{equation}
    P_{\text{ADC}} = P_{\text{DAC}} = FOM_W \times f_s \times 2^b,
    \vspace{-0.5em}
\end{equation}
where $FOM_W$ corresponds to Walden's figure-of-merit for assessing ADC power efficiency, the variable $f_s$ stands for the Nyquist sampling frequency, while $b$ represents the ADC resolution bits. Our assumption employs $FOM_W = 46.1$ fJ/conversion-step for a $100$ MHz bandwidth \cite{9370097}. We utilize ADCs with $1$ bit and $4$ bits resolution for the semi-passive RIS \cite{9685434} and transceivers \cite{7876856}, respectively. We also consider $8\%$ of elements in the semi-passive RIS being connected to baseband components as suggested in \cite{9685434}. All other parameters align with those defined in Table \ref{tab:System parameters}.

\begin{figure}
    \centering
    \includegraphics[scale = 0.3]{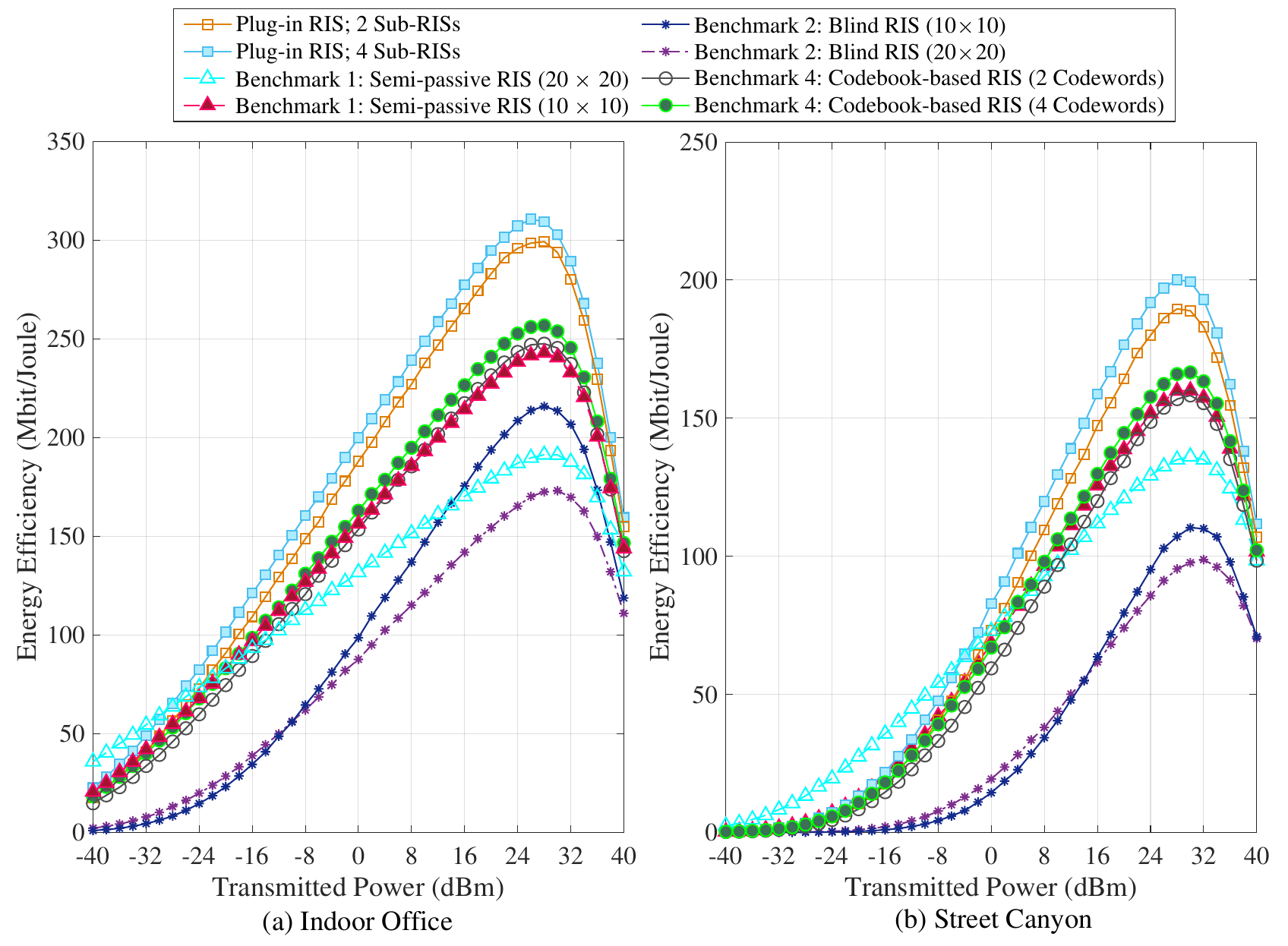}
    \caption{Energy efficiency comparison for (a) the BS-side RIS-aided system in the indoor office environment, (b) the UE-side RIS-aided system in the street canyon environment.}
    \label{fig:EE Sim}
    \vspace{-1em}
\end{figure}

As illustrated in Fig. \ref{fig:EE Sim}, the plug-in RIS performs better than all benchmarks in indoor and outdoor scenarios. The performance of the plug-in RIS improves as the number of sub-RISs increases, benefiting from higher SNR and the power-efficient nature of passive elements. Essentially, increasing the number of sub-RISs allows for increasing the number of segments within the dead zone. Consequently, each segment becomes smaller, leading to improved received SNR and better $\eta_{EE}$.
On the contrary, increasing the number of elements in the semi-passive RIS negatively affects EE and leads to performance deterioration. It is important to note that although enlarging the RIS size in the semi-passive RIS configuration increases received SNR due to enhanced beamforming gain, it also involves more baseband components in the RIS structure, resulting in higher power consumption. In this case, power consumption primarily impacts EE performance and leads to decreased EE.

Similarly, in the case of the blind RIS, increasing the number of elements reduces EE due to the dominant effect of power consumption by passive elements. In other words, in the blind RIS, increasing the passive elements has a more negative effect on power consumption than its positive effect on beamforming gain.
It is worth mentioning that a notable EE performance gap exists between blind RIS and the proposed plug-in RIS, primarily due to the enhanced SNR provided by the plug-in RIS without a proportional increase in power consumption.
{The EE performance of the codebook-based RIS is also depicted in Fig. \ref{fig:EE Sim}, which is worse than the plug-in RIS due to the controller power consumption of codebook-based RIS. It is worth mentioning that the higher the number of codewords, the more the EE performance is enhanced, similar to the plug-in RIS, which indicates better performance with increasing the number of sub-RISs.}

\vspace{-1.5em}
\subsection{Detector Complexity}
\label{subsec: SIM RES-Detector complexity}
\vspace{-0.5em}

\begin{figure}
    \centering
    \includegraphics[scale = 0.29]{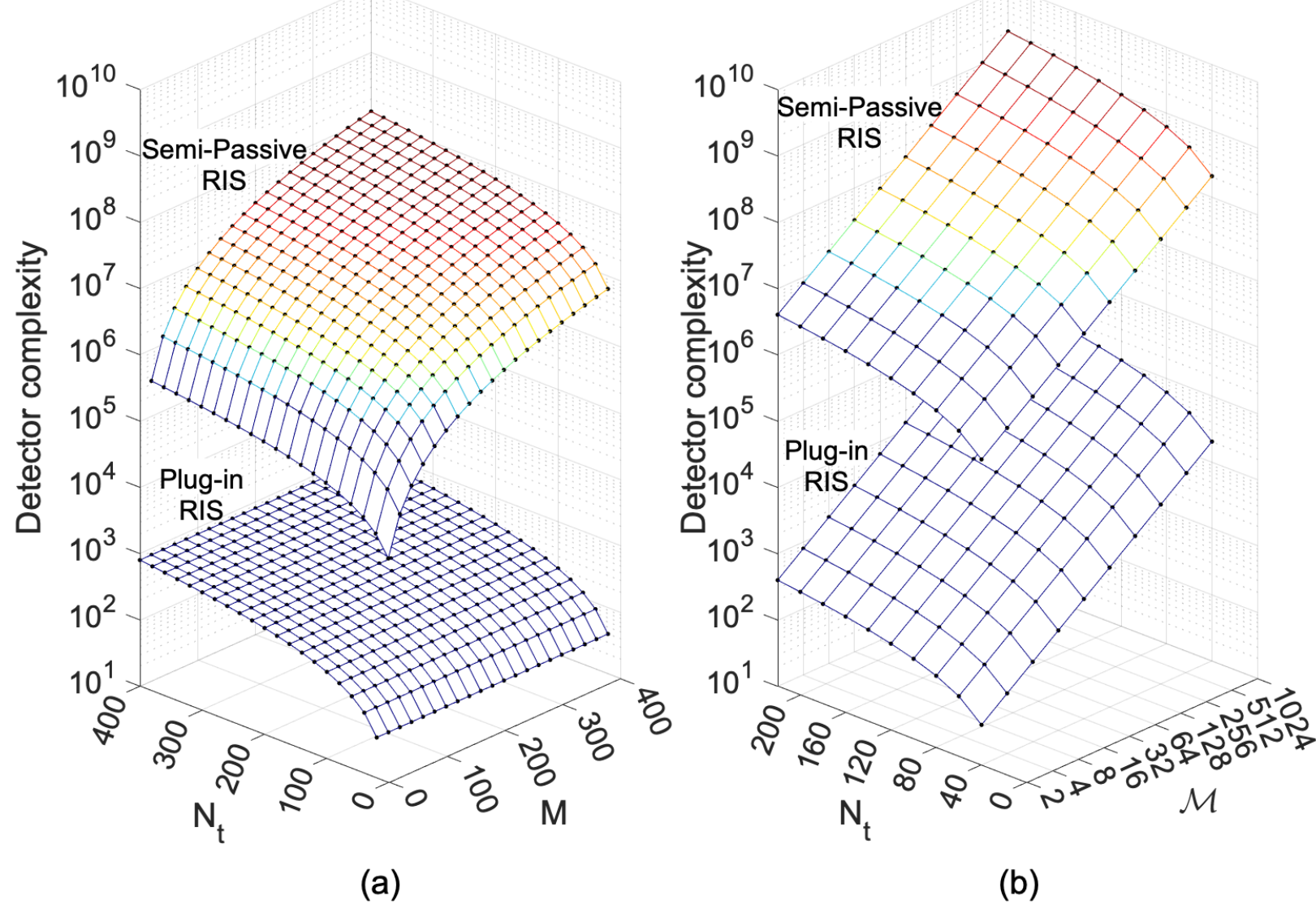}
    \caption{Detector complexity comparison as a function of (a) number of transmit antennas, $N_t$, and number of RIS elements, $M$, (b) number of transmit antennas, $N_t$, and constellation order, $\mathcal{M}$.}
    \label{fig: Detector Complexity}
    \vspace{-1.5em}
\end{figure}

The detector complexity for different numbers of transmit antennas, RIS elements, and constellation orders is depicted in Fig. \ref{fig: Detector Complexity}.
As shown in Fig. \ref{fig: Detector Complexity}(a), it is evident that the plug-in RIS's detector exhibits significantly lower computational complexity compared to the semi-passive RIS's detector. It is worth noting that the increase in the number of RIS elements ($M$) has a more significant impact on the complexity of the semi-passive RIS detector compared to the increase in the number of antennas ($N_t$); this is attributed to the quadratic relationship between $M$ and the detector complexity in the semi-passive detector expression, while the plug-in RIS design remains unaffected by the increase in $M$. Note that in Fig. \ref{fig: Detector Complexity}(a), we have considered $\mathcal{M} = 2$.
Fig. \ref{fig: Detector Complexity}(b) further illustrates how variations in $N_t$ and $\mathcal{M}$ affect detector complexity while maintaining a constant number of RIS elements ($M = 100$), emphasizing the superiority of the proposed plug-in RIS over semi-passive RIS design.

\vspace{-1em}
{
\section{Illustrative Results - Multi User Case}\label{Sec:Illustrative Results (MU)}
\vspace{-0.5em}
\begin{figure}
    \centering
    \includegraphics[scale = 0.3]{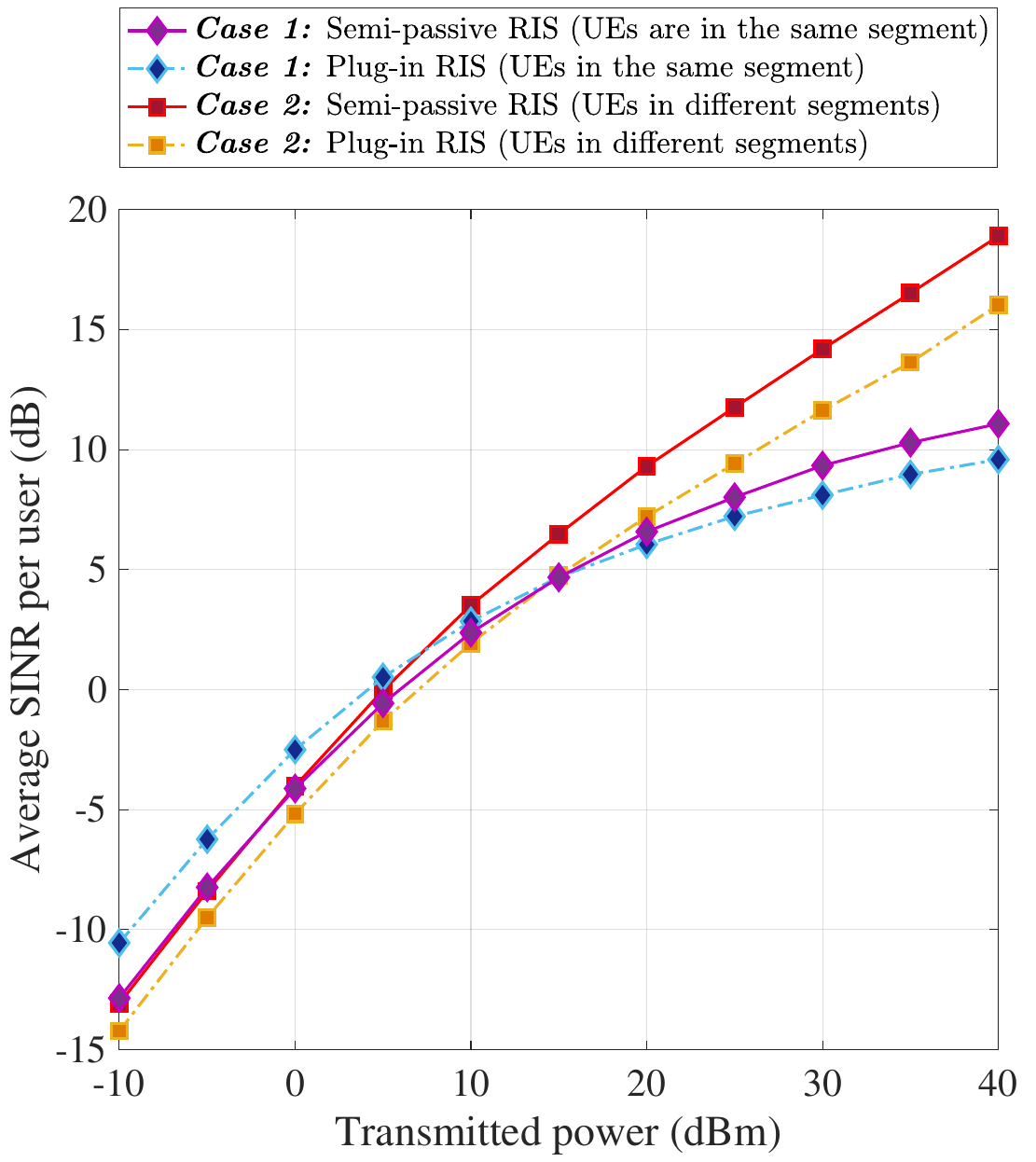}
    \caption{Average SINR performance of the plug-in RIS compared to the semi-passive RIS for a two-UE scenario.}
    \label{fig:MU_Avg_SINR}
    \vspace{-1em}
\end{figure}

This section investigates the SINR performance of the proposed plug-in RIS in an MU scenario. We consider a system setup with two UEs, two spatial segments, and the street canyon environment and investigate two different cases to ensure a more precise analysis: in case 1, both UEs are located in the same segment, while in case 2, they are situated in different segments. 
Note that, for a fair comparison, we consider the same RIS size for semi-passive RIS and each sub-RIS, i.e., $10 \times 10$; hence, in each transmission period, the same number of reflectors engaged in the communication for both schemes, resulting in the same passive beamforming.

Fig. \ref{fig:MU_Avg_SINR} compares the SINR    performance of MU-plug-in RIS and MU-semi-passive RIS.
It is important to highlight that, in the case of MU-semi-passive RIS, we employed one RIS for each UE to ensure effective passive beamforming. Sharing a single semi-passive RIS among two UEs results in the beamforming gain of the RIS being divided between them, thereby declining the effectiveness of passive beamforming.
As shown in Fig. \ref{fig:MU_Avg_SINR}, at high transmitted power levels,  plug-in RIS experiences $1.5$ dB and $3$ dB performance loss compared to the benchmark for case 1 and case 2, respectively. 
The lower performance loss of case 1 can be attributed to the better performance of plug-in RIS in this scenario. The plug-in RIS in case 1 only exploits one sub-RIS for serving both UEs; hence, there is only interference in the BB stage, while in the other scenarios, two RISs/sub-RISs are adopted, which entails interference in the analog stage as well. Accordingly, the RZF BB precoder works more efficiently in plug-in RIS under case 1's setup. In order to cancel interference in both analog and BB stages, more sophisticated precoder designs like joint-group-processing (JGP) and common-group-processing (CGP) are required, as described in \cite{8891298,9086460}. The interference effect at the analog stage is more dominant at the low power levels; hence, plug-in RIS under case 1's setup performs better than the other setups.
}

\vspace{-1em}
\section{Conclusion}
\label{Sec: Conclusion}
\vspace{-0.5em}

This paper has introduced a practical RIS structure, the plug-in RIS, for mmWave communication systems to enhance coverage to the dead zones. The plug-in RIS operates passively, cleverly integrating the control mechanism within the transmitted beam to the RIS, eliminating the need for conventional reliable control links. It also relaxes the channel estimation process by eliminating complex cascaded channel decoupling, a common challenge in RIS-assisted systems.
In this approach, dead zones are divided into segments, each served by a dedicated sub-RIS with a fixed beam. Computer simulation results have shown that deploying four sub-RISs causes only slight degradation in ABER and achievable rates, making fully passive operation feasible.
We have also compared the EE of the plug-in RIS with benchmarks. While the semi-passive RIS slightly outperforms the plug-in RIS in terms of ABER and achievable rate, its EE performance is worse than the proposed plug-in RIS due to active baseband components. Besides, the plug-in RIS detector exhibits superior complexity performance compared to the semi-passive RIS thanks to the conventional channel estimation mechanism. {Ultimately, extending the plug-in RIS into an MU scenario has also been investigated by studying the average SINR performance compared to the semi-passive RIS. It has been revealed that adopting plug-in RIS in an MU scenario only results in a few dB of performance loss compared to the semi-passive RIS. Nevertheless, MU solutions to mitigate interference can be future research directions that require more sophisticated design scenarios by considering advanced algorithms.}

In summary, our plug-in RIS proves to be a compelling solution, performing closely to the semi-passive RIS regarding ABER and achievable rate performances, surpassing the benchmarks in terms of EE performance and detector complexity, and addressing challenges in mmWave systems. 

\vspace{-1em}

\appendices

\section{Proof of Corollary \ref{Cor: EFD Approximation}}
\label{Apndx: EFD Aprox Proof}
\vspace{-0.5em}
By substituting $\theta_0 = 0$ into (\ref{eq: EFD exact}), we can simplify it as:
\begin{equation}\label{eq: EFD tangent}
    \textrm{EFD} = 2 d \times \tan \left( \frac{\textrm{HPBW}}{2} \right).
\end{equation}
On the other hand, the tangent function can be expanded using the MacLaurin series as follows:
\begin{equation}
    \tan(x) = x + \frac{x^3}{3} + \frac{2x^5}{15} + \dots ; \ \ \textrm{if} \ \ \lvert x \rvert  < \frac{\pi}{2},
\end{equation}
whereas when $x$ is of a small magnitude, it can be approximated as $\tan(x) \approx x$.
Note that $\frac{\textrm{HPBW}}{2}$ becomes relatively small when implementing a large array at the BS. Consequently, the approximation formula remains applicable here, and we can further simplify (\ref{eq: EFD tangent}) to (\ref{eq: EFD approximation}).

\vspace{-1em}

\section{Proof of Lemma \ref{Lem: CPEP}}
\label{Apndx: CPEP proof}
By exploiting \cite[equation (24)]{10176315} and after some mathematical manipulations, (\ref{eq: 1st equation CPEP}) can be updated as 
\begin{equation} \label{eq: CPEP Apndx}
    \begin{split}
        \mathcal{P}( s^*  \rightarrow & \hat{s} | \alpha_k, \beta_k ) 
        = \mathcal{P} ( |\sqrt{P} G_t G_r \mathbf{f}_r^H \mathbf{H}_{\text{eff},k} \mathbf{f}_t (s^* - \hat{s})|^2 \\
        & + 2 \mathcal{R}\{ \sqrt{P} G_t G_r \mathbf{n}^H \mathbf{f}_r \mathbf{f}_r^H \mathbf{H}_{\text{eff},k} \mathbf{f}_t (s^* - \hat{s}) < 0 \}).
    \end{split}
    \vspace{-2em}
\end{equation}
As the elements of $\mathbf{n}^H$ follow a complex normal distribution with variance $\sigma^2$, its real component also conforms to a normal distribution with variance $\frac{\sigma^2}{2}$, represented as $\mathcal{R}\{\mathbf{n}^H\} \sim \mathcal{N}(\mathbf{0}_{N_r}, \frac{\sigma^2}{2}\mathbf{I}_{N_r})$. Accordingly, calculating (\ref{eq: CPEP}) is straightforward.

%express:
%\vspace{-1em}
%\begin{equation}
%\begin{split}
%    \mathcal{R}\{A\mathbf{n}^H \} + \mu \mathbf{1}_{N_r}^{H} \sim \mathcal{N}(\mu \mathbf{1}_{N_r}, \frac{A^2 \sigma^2}{2} \mathbf{I}_{N_r}),
%\end{split}
%\vspace{-0.5em}
%\end{equation}
%where $A = 2 \sqrt{P} G_t G_r || \mathbf{f}_r \mathbf{f}_r^H \mathbf{H}_{\text{eff},k} \mathbf{f}_t (s^* - \hat{s})||$ and $\mu = |\sqrt{P} G_t G_r \mathbf{f}_r^H \mathbf{H}_{\text{eff},k} \mathbf{f}_t (s^* - \hat{s})|^2$; accordingly, (\ref{eq: CPEP Apndx}) can be calculated as 

%\begin{equation}
%\begin{split}
%    \mathcal{P}( s^*  \rightarrow  & \hat{s}  | \alpha_k, \beta_k ) = Q (\sqrt{2}\frac{\mu}{A \sigma})\\
%    & = Q(\frac{\sqrt{P}G_t G_r |\mathbf{f}_r^H \mathbf{H}_{\text{eff},k} \mathbf{f}_t(s^* - \hat{s})|^2}{\sqrt{2}\sigma || \mathbf{f}_r \mathbf{f}_r^H \mathbf{H}_{\text{eff},k} \mathbf{f}_t (s^* - \hat{s})||}).
%\end{split}
%\vspace{-2em}
%\end{equation}

%\section*{Acknowledgment}

%The authors would like to thank...

\ifCLASSOPTIONcaptionsoff
  \newpage
\fi

\vspace{-1em}

\bibliographystyle{ieeetr}
\bibliography{reference}

%\begin{IEEEbiography}{Michael Shell}
%Biography text here.
%\end{IEEEbiography}

%\begin{IEEEbiographynophoto}{John Doe}
%Biography text here.
%\end{IEEEbiographynophoto}

%\begin{IEEEbiographynophoto}{Jane Doe}
%Biography text here.
%\end{IEEEbiographynophoto}

\end{document}